\documentclass{article}
\usepackage{ijcai26}

\usepackage{times}
\usepackage{soul}
\usepackage{url}
\usepackage[hidelinks]{hyperref}
\usepackage[utf8]{inputenc}
\usepackage[small]{caption}
\usepackage{graphicx}
\usepackage{amsmath,amssymb,amsfonts}
\usepackage{amsthm}
\usepackage{booktabs}
\usepackage{algorithm}
\usepackage{algorithmic}
\usepackage[switch]{lineno}
\usepackage{appendix}

\usepackage{subcaption} 
\usepackage{fancyhdr}
\usepackage{float}
\usepackage{algorithm}
\usepackage{algorithmic}
\usepackage{xcolor}
\usepackage{enumitem} 
\usepackage{booktabs}
\usepackage{multirow}
\usepackage{stackengine}
\usepackage{verbatim}

\renewcommand{\algorithmicensure}{\textbf{Output:}}

\newcommand{\cP}{\mathcal{P}}

\newcommand{\fx}{\boldsymbol{x}}

\newtheorem{theorem}{Theorem}

\newtheorem{definition}[theorem]{Definition}

\newtheorem{lemma}[theorem]{Lemma}

\title{Connected EF1 Allocations Exist in Discrete Chore Cutting\thanks{This paper appeared in the proceedings of IJCAI-ECAI 2026. The authors thank Nisarg Shah for bringing to their attention the earlier work of Vittorio Bil{\`{o}}, Martin Loebl, and Cosimo Vinci (AAAI 2026), which independently proved the existence of connected EF1 allocations. In the present paper, we obtain the same result using a different approach.
}}

\author{
Ankang Sun$^1$\and
Bo Li$^2$
\\
\affiliations
$^1$School of Computer Science and Technology, Shandong University\\$^2$Department of Computing, The Hong Kong Polytechnic University\\
\emails
ankang.sun@sdu.edu.cn,
comp-bo.li@polyu.edu.hk
}

\begin{document}

\maketitle

\begin{abstract}
   In this paper, we prove the existence of an envy-free up to one item (EF1) division for a discrete chore.
	     Our approach builds on the powerful framework of Simmons-Su, which leverages Sperner's lemma to guarantee the existence of a simplex corresponding to a sequence of similar fractional divisions, ensuring that each agent is satisfied with a different bundle. Bil{\`{o}} et al. \shortcite{DBLP:journals/geb/BiloCFIMPVZ22}
         introduced a rounding technique that converts the fractional divisions into a connected integral EF1 division for goods
	     when there are at most four agents, and this method was later extended by Igarashi \shortcite{DBLP:conf/aaai/Igarashi23}
         to accommodate any number of agents. 
         {\color{black}However, these rounding techniques for goods do not directly apply to chores because the definitions of EF1 differ in the two settings.
         To overcome this asymmetry, we modify the existing rounding techniques and show that connected EF1 divisions exist for a discrete chore.
         }
\end{abstract}

\section{Introduction}

Cake cutting is a fundamental problem that has attracted extensive attention in economics, mathematics, and computer science \cite{DBLP:books/daglib/0017730,DBLP:reference/choice/Procaccia16,JW}. 
In this problem, the ``cake'' -- represented as the unit interval $[0,1]$ -- serves as a metaphor for a heterogeneous, divisible resource to be allocated among agents with varying preferences.
Beyond its scientific importance, the cake-cutting problem is intellectually engaging and gives rise to a variety of intriguing and challenging questions \cite{DBLP:journals/cacm/Procaccia13}.
One such question is how to divide the cake in an envy-free manner \cite{DBLP:conf/focs/AzizM16,dubins1961cut,VARIAN197463}, i.e., every agent believes her piece is weakly better than all other pieces.
Remarkably, it has been shown in \cite{stromquist1980cut,Su01121999} that an envy-free division, in which every agent receives a contiguous piece, always exists.
The contiguity property is particularly valuable in practical scenarios, especially when the resource has temporal or spatial structures.

The assumption that the cake can be divided arbitrarily is often too strong and unrealistic for many real-world applications. For instance, the use of a lecture room is typically allocated in discrete time slots, and land is commonly divided into distinct plots due to geographical concerns.
These problems fall within the umbrella of the discrete cake-cutting problem \cite{DBLP:journals/tcs/HeydrichS15,DBLP:journals/dam/MarencoT14,DBLP:journals/dam/Suksompong19}, where $m$ indivisible items are arranged on the vertices of a path and each item must be allocated in its entirety to a single agent.
The discrete cake-cutting problem is often considered together with connectivity constraints, which require that each agent receive a contiguous block of items along the path \cite{DBLP:journals/jair/GoldbergHS20,DBLP:journals/iandc/HohneS21,DBLP:conf/atal/KawaseRS24}.

The indivisibility of items introduces significant challenges, since envy-free allocations are not always possible. 
For example, if there is a single indivisible item to be allocated between two agents who both value it positively, the recipient will inevitably be envied by the other. 
To resolve the above impossibility, Bil{\`{o}} et al.
\shortcite{DBLP:conf/innovations/BiloCFIMPVZ19,DBLP:journals/geb/BiloCFIMPVZ22} considered envy-free up to one item (EF1) divisions, first introduced by Budish \shortcite{7c65302b-f079-361a-94f1-0c3c9f6fc76b}. 
In an EF1 division, any agent's envy towards another can be eliminated by removing an item from the latter's allocation.
While EF1 allocations are always guaranteed in the unconstrained setting of the allocations of indivisible items 
\cite{DBLP:conf/sigecom/LiptonMMS04}, the problem becomes more complex in the discrete version of cake-cutting due to the connectivity constraint.
In this paper, we denote a slightly more robust version of EF1 as $\textnormal{EF1}_{outer}$, which informally requires that envy can be eliminated by removing an item at the boundary.
Bil{\`{o}} et al. 
\shortcite{DBLP:conf/innovations/BiloCFIMPVZ19,DBLP:journals/geb/BiloCFIMPVZ22} proved that an $\textnormal{EF1}_{outer}$ division exists for at most four agents, leaving the general case unresolved for several years.
Until 2023, 
Igarashi 
\shortcite{DBLP:conf/aaai/Igarashi23} revised the rounding technique in \cite{DBLP:conf/innovations/BiloCFIMPVZ19,DBLP:journals/geb/BiloCFIMPVZ22} and affirmatively answered the question of the existence of $\textnormal{EF1}_{outer}$ divisions for any number of agents.

The fair allocation of a discrete chore
has been studied in the literature \cite{DBLP:journals/aamas/BouveretCL19,DBLP:journals/iandc/HohneS21}.
{\color{black}Transferring the rounding constructions developed for goods to chores requires accounting for the different definitions of $\textnormal{EF1}_{outer}$ in the two settings.}
This challenge arises from two key differences: first, the allocation of goods and chores is fundamentally distinct; second, the definitions of $\textnormal{EF1}_{outer}$ for goods and chores differ. 
In particular, an $\textnormal{EF1}_{outer}$ division of goods requires that an agent A does not envy agent B after removing a boundary item from B’s bundle. 
In contrast, for chores, the requirement becomes that A does not envy B after removing a boundary item from A’s own bundle. 
This asymmetry in the definitions precludes straightforward reductions between the goods and chores settings, as noted in \cite{DBLP:conf/ijcai/Aziz16}.
{\color{black}In this work, we prove that connected $\textnormal{EF1}_{outer}$ divisions of chores always exist. 
}

\subsection{Our Results and Contributions}\label{sec::resultsandcon}
{\color{black}We prove that} a connected $\textnormal{EF1}_{outer}$allocation of indivisible chores exists for any number of agents with monotone disutility functions.
The discrete version of the cake-cutting problem has been studied by 
Bil{\`{o}} et al. 
\shortcite{DBLP:journals/geb/BiloCFIMPVZ22} and 
Igarashi 
\shortcite{DBLP:conf/aaai/Igarashi23}.
In particular, 
Igarashi 
\shortcite{DBLP:conf/aaai/Igarashi23} provided a bright and powerful framework to achieve $\textnormal{EF1}_{outer}$ for an arbitrary number of agents.
However, the protocol for cake-cutting does not directly work for the chores division problem.
This is because the protocols for positive valuations are not usually applicable for negative valuations, and ``in general there are no reductions from allocation of chores to goods or vice versa'' \cite{DBLP:conf/ijcai/Aziz16}.

In the following, we present the high-level idea of the techniques used in the paper and also compare our techniques with those in \cite{DBLP:journals/geb/BiloCFIMPVZ22,DBLP:conf/aaai/Igarashi23}.
We follow the general framework of Simmons-Su protocol \cite{Su01121999}. 
In particular, it encodes the possible divisions of a path by vertices of a triangulated $(n-1)$-simplex, where $n$ is the number of agents.
It then assigns ownership labels to the vertices of the triangulation, and each agent colors every vertex she owns with the index of her most preferred bundle in the division encoded by that vertex.
By Sperner’s lemma, there exists a fully colored elementary simplex, which corresponds to a sequence of similar divisions in which different agents prefer different bundles. By considering finer and finer triangulations, the protocol yields approximate solutions that converge to an envy-free cake-cutting division.

For the discrete version of cake cutting, Bil{\`{o}} et al. \shortcite{DBLP:journals/geb/BiloCFIMPVZ22} observed that encoding only integral divisions can make the divisions corresponding to the vertices of an elementary simplex too far apart. They therefore considered a finer triangulation by allowing the knife to move in half-steps. 
Moreover, when coloring, agents' preferences over bundles are determined by \emph{virtual valuations},
which do not necessarily coincide with their actual preferences. 
Using half-step triangulations and virtual valuations, Bil{\`{o}} et al. \shortcite{DBLP:journals/geb/BiloCFIMPVZ22} proposed a rounding technique that works for at most four agents.

In follow-up work, Igarashi \shortcite{DBLP:conf/aaai/Igarashi23} pointed out that virtual valuations in \cite{DBLP:journals/geb/BiloCFIMPVZ22} treat left-most and right-most items of a bundle symmetrically (left-right symmetry), which prevents their approach from extending to more than four agents.
Instead, Igarashi \shortcite{DBLP:conf/aaai/Igarashi23} introduced left–right asymmetric virtual valuations and a corresponding rounding technique, 
thereby extending the existence of EF1 divisions in discrete cake-cutting to an arbitrary number of agents.

In this paper, we follow the general framework in \cite{DBLP:journals/geb/BiloCFIMPVZ22,DBLP:conf/aaai/Igarashi23}, but reconstruct key components with new structural insights.
At a high level, the inherent differences between goods and chores, together with the asymmetric definitions of $\textnormal{EF1}_{outer}$ in the two settings, prevent the rounding algorithm of Igarashi \shortcite{DBLP:conf/aaai/Igarashi23} from working for chores.
The detailed differences between our rounding and the existing ones arise from the different definitions of the virtual functions, which act as agents' preferences when coloring.
In particular, the virtual functions in \cite{DBLP:journals/geb/BiloCFIMPVZ22,DBLP:conf/aaai/Igarashi23} rely only on the information between two knife positions.
This limited information is unlikely to suffice for producing an $\textnormal{EF1}_{outer}$ division for chores, regardless of whether the virtual functions exhibit left-right asymmetry.
Instead, we propose new virtual functions that depend not only on the information of the bundle between the knives but also on the item immediately outside to the left of the bundle.
We explain this difference in detail after the definition of our virtual functions in Section~\ref{sec::results}.

Under the framework of Simmons-Su protocol, the procedure of rounding the fully colored elementary simplex is not independent of the coloring functions, which is also the case in \cite{DBLP:journals/geb/BiloCFIMPVZ22,DBLP:conf/aaai/Igarashi23}. 
Since we employ different virtual functions, the detailed procedures for rounding the fully colored elementary simplex into an $\textnormal{EF1}_{outer}$ division inevitably differ from those in \cite{DBLP:journals/geb/BiloCFIMPVZ22,DBLP:conf/aaai/Igarashi23}.
In a nutshell, while we follow the general idea of the framework in \cite{DBLP:journals/geb/BiloCFIMPVZ22,DBLP:conf/aaai/Igarashi23}, we rebuild the virtual functions and the rounding, which are essential for establishing the existence of $\textnormal{EF1}_{outer}$ when dividing a discrete chore among any number of agents with monotone disutility functions.

{\color{black}We note that Bilò et al. \shortcite{DBLP:conf/aaai/BiloLV26} study approximate fairness for non-monotone valuations. In the special case of monotone non-increasing valuations, their Corollary 4.1 establishes the existence of EF1 allocations under path constraints. This result implies the main existence theorem of this paper: a connected $\textnormal{EF1}_{outer}$ allocation of chores exists for any number of agents with monotone disutility functions. 
In the monotone chore setting, any item whose removal eliminates envy must belong to the envious agent's own bundle. Their path-constrained fairness definition requires item removals to preserve connectivity, so the removed item must be a boundary item.
Thus, their path-constrained EF1 guarantee coincides with $\textnormal{EF1}_{outer}$ in our setting.
Our result and proof were developed independently, without knowledge of their work, which was not cited in the conference version of this paper. 
Both proofs use Sperner-type arguments, but the virtual valuations and rounding procedures are different. The present paper therefore provides an independent and alternative proof for the monotone chore setting.}

\subsection{Other Related Work}

The connected cake cutting problem has been extensively studied since the seminal work of 
Stromquist 
\shortcite{stromquist1980cut}, who first proved the existence of a connected EF division.
Su \shortcite{Su01121999} simplified Stromquist's proof via Sperner's lemma, and this technique is now widely used in fair division. 
Aumann et al. 
\shortcite{DBLP:conf/atal/AumannDH13} studied the problem of computing connected divisions that maximize social welfare. 
Bei et al. 
\shortcite{DBLP:conf/aaai/BeiCHTY12} investigated the problem of maximizing social welfare among all proportional connected divisions. 
Aumann and Dombb 
\shortcite{DBLP:journals/teco/AumannD15} quantified the efficiency loss under fairness constraints in connected cake cutting.
For the case of a divisible chore,
Robertson and Webb \shortcite{JW} gave the first envy-free protocol for three agents. Peterson and Su extended this protocol to four agents in \cite{Peterson01042002}, and an arbitrary number of agents in \cite{NpersonSu}.
Dehghani et al. 
\shortcite{DBLP:conf/soda/DehghaniFHY18} gave the first discrete and bounded envy-free protocol for chore division among any number of agents. 
Heydrich and van Stee 
\shortcite{DBLP:journals/tcs/HeydrichS15} studied the trade-off between fairness and efficiency.


The connectivity constraint has also been explored in the discrete version of the cake-cutting problem. 
Marenco and Tetzlaff 
\shortcite{DBLP:journals/dam/MarencoT14} proved the existence of envy-free connected divisions when each item is valued positively by at most one agent. 
Subsequent research has shown the existence of relaxed notions of envy-freeness in cases where items may be valued positively by any number of agents \cite{barrera2017discreteenvyfreedivisionnecklaces,DBLP:journals/dam/Suksompong19}. 
The computational complexity of determining the existence of a fair connected allocation has been studied, for goods in \cite{DBLP:journals/jair/GoldbergHS20,DBLP:conf/atal/KawaseRS24} and for chores in \cite{DBLP:journals/aamas/BouveretCL19}.
The trade-off between fairness and efficiency has been studied in \cite{DBLP:journals/mst/CaragiannisKKK12,DBLP:journals/iandc/HohneS21,DBLP:journals/dam/Suksompong19,DBLP:journals/iandc/SunL25}, considering various definitions of fairness and efficiency.





\section{Preliminaries}
For any positive integer $k$, let $[k] := \{1,\ldots, k\}$. 
There is a set $V$ of $m$ indivisible chores $v_1,\ldots,v_m$ placed on the vertices of a path, which are to be allocated to $n$ agents.
The chores are named from left to right, with chore $v_j$ placed at position $j$.
Throughout the paper, we use $v_j$ both as the identifier of the chore and as its position $j$ on the path, so that $v_j$ can also be compared with real numbers.


A subset $X \subseteq V$ is called \emph{connected} if it induces a connected subgraph. Let $\mathcal{C}(V)$ denote the set of all connected subsets of $V$. We consider the empty set and any singleton set as connected.
Agents are required to receive connected subsets or \emph{bundles}. Moreover, each agent $i \in [n]$ is associated with a cost or disutility function $u_i: \mathcal{C}(V) \rightarrow \mathbb{R}_{\geq 0}$.
Each $u_i$ is \emph{monotone}, that is, for any connected subset $S\subseteq T \subseteq V$, $u_i(S)\leq u_i(T)$.
For ease of representation, we write $u_i(v)$ for $u_i(\{v\})$.

A \emph{division} is a partition $\mathcal{P}=(P_1,\ldots,P_n)$ of the path into $n$ connected bundles, such that for any $i,j\in [n]$, $P_i\cap P_j=\emptyset$ and $\bigcup_{t\in [n]} P_t=V$.
Given partition $\cP$, the subset $P_i$ is the $i$-th piece from the left. 

The division $\cP$ is \emph{envy-free} if there exists a permutation $\pi: [n]\rightarrow [n]$ such that for any $i,j \in [n]$, $u_i(P_{\pi(i)}) \leq u_i(P_{\pi(j)})$ holds.
As we have discussed, envy-free divisions are not guaranteed to exist for indivisible items.
Thus, we shift our attention to the relaxation, namely envy-free up to one item (EF1).
In an EF1 chores division, the envy can be eliminated by removing a chore from the bundle of the envious agent. In the original definition of EF1, there is no connectivity requirement on the resulting bundle after removing an item. 
Motivated by the fact that agents prefer connected subsets over disconnected ones, we instead consider a stronger version, namely \emph{envy-free up to an outer item} ($\textnormal{EF1}_{outer}$), which requires that the resulting bundle after removing an item should be connected.
The notion of $\textnormal{EF1}_{outer}$ has been studied in the context of the discrete version of cake-cutting \cite{DBLP:journals/geb/BiloCFIMPVZ22,DBLP:conf/aaai/Igarashi23}.
Below, we present the formal definition of $\textnormal{EF1}_{outer}$ in the discrete chore division problem.

\begin{definition}[$\textnormal{EF1}_{outer}$]
    For the problem of discrete chore division, a division $\cP=(P_1,\ldots,P_n)$ is $\textnormal{EF1}_{outer}$ if there exists a permutation $\pi:[n]\rightarrow [n]$ such that for any $i,j \in [n]$, either (1) $u_i(P_{\pi(i)}) \leq u_i(P_{\pi(j)})$, 
    or (2) $\exists v\in P_{\pi(i)}$ such that $P_{\pi(i)} \setminus \{v\}$ is connected and $u_i(P_{\pi(i)}\setminus \{v\}) \leq u_i(P_{\pi(j)} )$.
\end{definition}

\subsection{Sperner’s Lemma and Simmons-Su Protocols}

We provide a brief overview of Sperner’s Lemma and Simmons-Su protocols.
Let us begin with combinatorial topology. A $(n-1)$-simplex is the convex hull of $n$ affinely independent points in space $\mathbf{R}^m$ with $m\geq n-1$. 
Let $\textnormal{conv}(\fx_1,\ldots,\fx_n)$ denote the convex hull of the $\fx_1,\ldots,\fx_n$.
For $(n-1)$-simplex $S=\textnormal{conv}(\fx_1,\ldots,\fx_n)$,  $\fx_1,\ldots,\fx_n$ are referred to as the main vertices of $S$.
Given a subset $J \subseteq \{\fx_1,\ldots,\fx_n\}$ with $|J|=k+1$, the $k$-simplex spanned by $J$ is said to be a $k$-face of $S$, denoted as $F_J$.
A \emph{facet} of simplex $S$ is the face spanned by the $n-1$ of the $n$ main vertices of $S$.

A \emph{triangulation} $T$ of an $(n-1)$-simplex $S$ is a collection of smaller simplices $S_1,\ldots,S_t$, satisfying that $\bigcup _{j\in [t]} S_j = S$ and for any $i,j \in [t]$, simplices $S_i$ and $S_j$ either intersect at a shared face or do not intersect at all.
The smaller simplices $S_1,\ldots,S_t$ are the \emph{elementary} simplices of $T$.

Fix simplex $S$ and its triangulation $T$. Let $V(T)$ denote the set of vertices of $T$. In other words, $V(T)$ is the union of the vertices of all the elementary simplices in $T$.
A \emph{coloring} function is a mapping $\lambda: V(T) \rightarrow 2^{[n]}$ that assigns a set of numbers, also referred to as colors, to each vertex of $T$.
A coloring function $\lambda$ is called \emph{Sperner coloring} if the main vertices of $S$ are assigned different colors, and moreover, any other vertex in the interior of some $k$-face must be assigned one of the labels of the main vertices that span that face.
An elementary simplex $S^*=\textnormal{conv}(\fx_1^*,\ldots,\fx_n^*)$ is called \emph{fully colored} under coloring function $\lambda$ if each of its vertices is assigned a distinct color by $\lambda$, i.e., there exists a permutation $\pi:[n]\rightarrow [n]$ such that $\lambda(\fx_i^*) = \pi(i)$ for all $i\in [n]$.



\begin{theorem}[Sperner's lemma]\label{thm::sperner}
    For an $(n-1)$-simplex $S$ and its triangulation $T$, if $V(T)$ is colored by a Sperner coloring function $\lambda$, then there must exist an odd number of fully colored elementary $(n-1)$-simplices.
    In particular, there exists at least one. 
\end{theorem}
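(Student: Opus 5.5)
The plan is the classical induction on the dimension, combined with a parity (door-counting) argument. We prove the stronger statement: for every $k$, the number of fully colored elementary simplices in a Sperner-colored triangulation of a $k$-simplex is odd. Since $\lambda$ may assign sets of colors, we first fix a single color from $\lambda(\fx)$ for every vertex, choosing one that satisfies the Sperner condition. This is possible because every vertex on a face carries at least one label of that face's main vertices. A simplex that is fully colored for this single-valued coloring is fully colored in the sense of the definition, so it suffices to treat single-valued colorings.

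The base case $k=0$ is immediate: the triangulation is the single point $\fx_1$, and it carries its own color. For $k=1$, the segment is cut into subsegments whose endpoints are colored $1$ or $2$. The color changes an odd number of times along the segment, since it starts at $1$ and ends at $2$.

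For the inductive step, take an $(n-1)$-simplex $S$ whose main vertices $\fx_1,\ldots,\fx_n$ are colored $1,\ldots,n$. Call an $(n-2)$-dimensional face of an elementary simplex a \emph{door} if its $n-1$ vertices carry exactly the colors $\{1,\ldots,n-1\}$. Count the pairs $(\sigma,D)$, where $\sigma$ is an elementary simplex and $D$ is a door of $\sigma$, in two ways.

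\begin{itemize}
\item \textbf{Counting by simplices.} A fully colored $\sigma$ has exactly one door. Any other $\sigma$ has either zero doors or exactly two: if its colors are exactly $\{1,\ldots,n-1\}$ with one color repeated, then removing either of the two vertices carrying the repeated color leaves a door. Hence the total count is congruent mod $2$ to the number of fully colored simplices.
\item \textbf{Counting by doors.} An interior $(n-2)$-face is shared by exactly two elementary simplices. A boundary $(n-2)$-face lies in exactly one elementary simplex. By the Sperner condition, a boundary door must lie in the facet spanned by $\fx_1,\ldots,\fx_{n-1}$: a door lying in any other facet would need all colors $1,\ldots,n-1$, but that facet misses one of these colors. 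The triangulation restricted to this facet is a triangulation of an $(n-2)$-simplex, with a Sperner coloring in colors $\{1,\ldots,n-1\}$. By induction it contains an odd number of fully colored elementary simplices, and these are exactly the boundary doors. So the total count is odd.
\end{itemize}

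Comparing the two counts, the number of fully colored elementary simplices is odd, and in particular at least one exists.

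The main obstacle is the geometric bookkeeping: checking that each interior $(n-2)$-face lies in exactly two elementary simplices and each boundary one in exactly one, and that the triangulation restricts to a triangulation of the facet. These follow from the face-to-face intersection property in the definition of a triangulation, together with a standard pseudomanifold argument; we would state them as standard facts rather than reprove them.
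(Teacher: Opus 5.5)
The paper gives no proof of this statement. Sperner's lemma is quoted as a classical black box, and only its existence part is used, in the proof of Theorem~\ref{thm::ef1}. Your door-counting induction is the standard proof and is correct. The two-way count of (simplex, door) pairs, the zero/one/two door analysis, and confining boundary doors to the facet $\textnormal{conv}(\fx_1,\ldots,\fx_{n-1})$ are all right. Citing the pseudomanifold and facet-restriction facts as standard is acceptable.

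One claim needs restricting. Your reduction from set-valued to single-valued colorings picks $c(\fx)\in\lambda(\fx)$ among the labels of the face whose relative interior contains $\fx$. This transfers only the existence conclusion, not the parity one. A simplex fully colored for $c$ is fully colored for $\lambda$, but not conversely, so the count for $\lambda$ can exceed the count for $c$.

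In fact, the odd-count claim fails for set-valued colorings. Take a segment with endpoints colored $\{1\}$ and $\{2\}$ and one interior vertex colored $\{1,2\}$. Both elementary segments are then fully colored, which is an even number. So your sentence ``it suffices to treat single-valued colorings'' should apply only to the ``in particular, there exists at least one'' part. The odd-number assertion should be read as a statement about single-valued colorings.

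Your reduction also silently reads the paper's condition $\lambda(\fx_i^*)=\pi(i)$ as $\pi(i)\in\lambda(\fx_i^*)$. Under the literal singleton reading, even existence fails in the same example. The membership reading is how the paper itself uses the condition, so none of this affects the paper, which needs only existence.
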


\begin{figure*}[htbp]
    \centering
    \begin{subfigure}[b]{0.45\textwidth}
        \centering
        \includegraphics[width=0.8\textwidth]{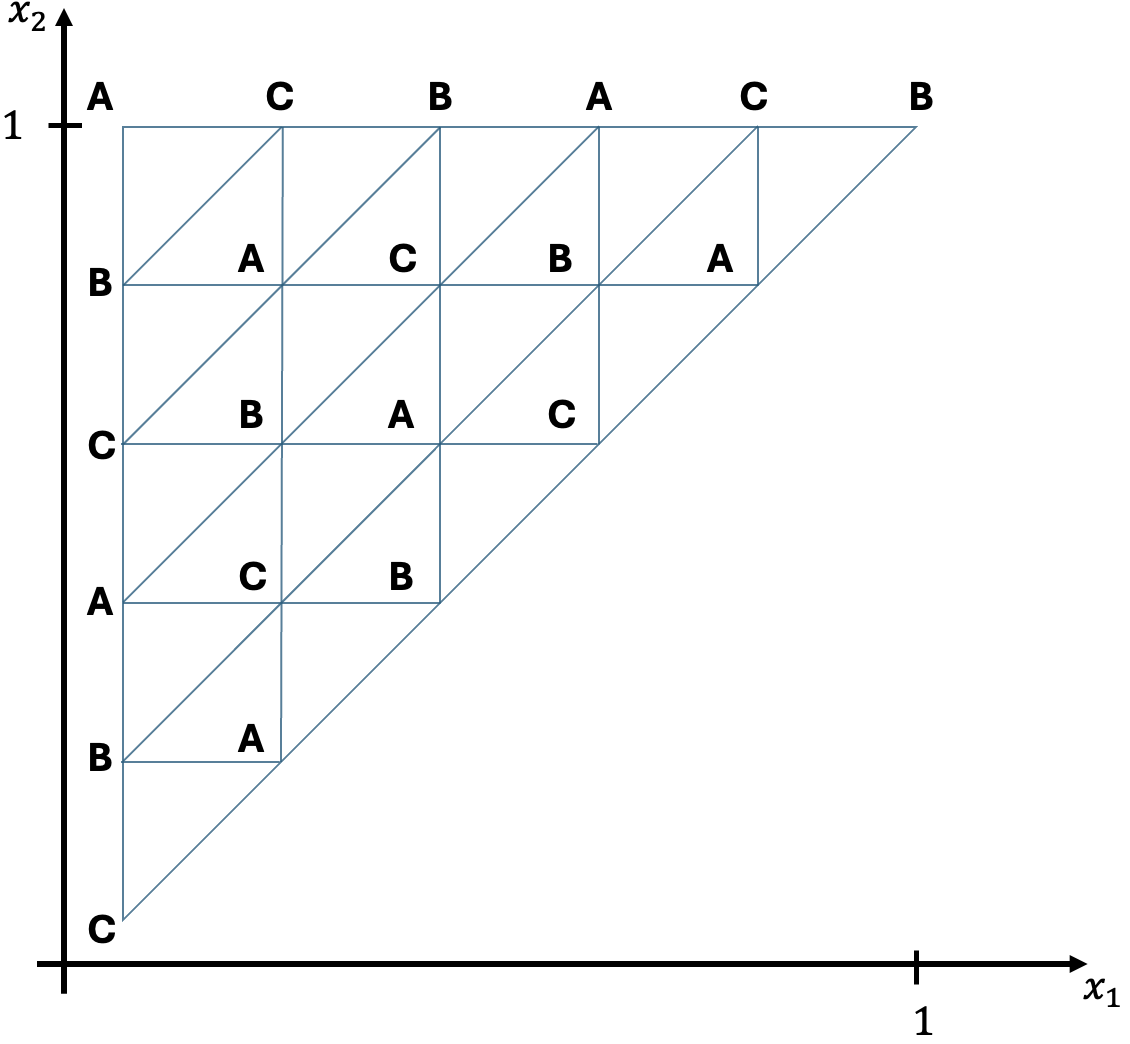} 
        \caption{Owner labeling.}
        \label{fig:subfigA}
    \end{subfigure}
    \begin{subfigure}[b]{0.45\textwidth}
        \centering
        \includegraphics[width=0.8\textwidth]{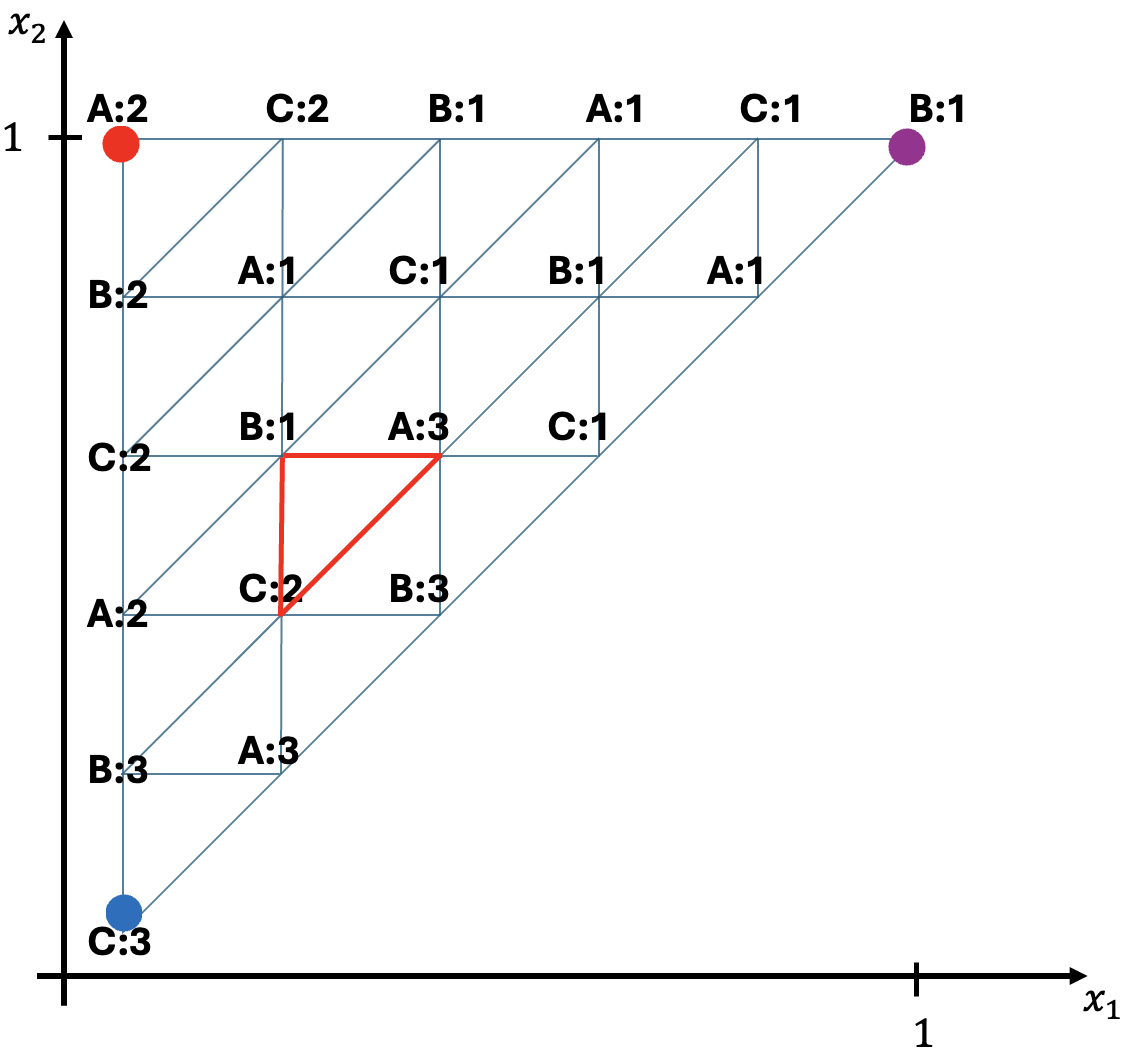} 
        \caption{Coloring}
        \label{fig:subfigB}
    \end{subfigure}


    \caption{Illustration of owner labeling and coloring. The letter A stands for Alice, B for Betty, and C for Charlie. In Figure \ref{fig:subfigB}, the owners of the vertices of the red-marked elementary simplex prefer different pieces, so the simplex is fully colored.}
    \label{fig:mainfigure}
\end{figure*}

We next briefly introduce the Simmons–Su protocols with an example of three agents. Consider dividing a cake $[0,1]$ of total size 1 using 2 knives among Alice, Betty, and Charlie.
Let $x_1$ and $x_2$ be respectively the positions of the first and the second knives. Then $0\leq x_1\leq x_2\leq 1$, and $(x_1,x_2)$ ranges over a 2-simplex $S$ (a triangle).
Then triangulate $S$ with a triangulation $T$ and assign ``ownership'' to each vertex so that the vertices of every elementary simplex in $T$ have three distinct labels, as shown in Figure~\ref{fig:subfigA}.
Each vertex $\fx=(x_1,x_2)$ corresponds to a division of the cake, with $x_i$ being the position of the $i$-th knife.
For example, the upper-left vertex marked in red in Figure \ref{fig:subfigB} is $(0,1)$, and in the corresponding division, the first and third pieces are empty while the second piece is the entire cake.
Visit each vertex of the triangulation and let the owner color the vertex with the indices of their most preferred pieces of the corresponding division.
If each agent prefers any piece with mass over an empty piece, i.e., the agents are \emph{hungry}, then the resulting color is Sperner's color.
By Sperner's lemma, there exists a fully colored elementary simplex $S^*$,
which corresponds to three similar divisions (one for each vertex of $S^*$) in which different agents choose different pieces of cake.
By repeating this procedure on a sequence of finer and finer triangulations, we obtain approximate solutions that converge to an envy-free connected division.

\section{The Existence of $\textnormal{EF1}_{outer}$ Divisions}\label{sec::results}

In this section, we present the main result of this work: for the discrete version of chore division problem, $\textnormal{EF1}_{outer}$ divisions always exist for an arbitrary number of agents with monotone disutility functions.

The discrete version of cake-cutting has been studied by  Bil{\`{o}} et al. 
\shortcite{DBLP:journals/geb/BiloCFIMPVZ22} and  Igarashi \shortcite{DBLP:conf/aaai/Igarashi23}.
They followed the Simmons–Su protocol and rounded a fully colored elementary simplex.
They observed that if the encoded divisions at the vertices are restricted to integral cuts (i.e., the knife always moves in full steps), then the divisions corresponding to the vertices of an elementary simplex are too far apart.
Thus, they considered a finer triangulation that allows the knives to move in half-steps.
In particular, they considered the simplex
$$
S_m= \left\{ \fx \in \mathbb{R}^{n-1}_{+} \mid \frac{1}{2} \leq x^1 \leq \cdots \leq x^{n-1} \leq m+\frac{1}{2} \right\},
$$
and Kuhn's triangulation $\mathrm{T}_{half}$ of $S_m$, with the vertex set $V(\mathrm{T}_{half})$ defined as:
$$
\left\{ \fx \in \mathbb{R}^{n-1}_{+} \mid x^i\in \{\frac{1}{2}, 1, \frac{3}{2}, \ldots, m+\frac{1}{2}\}, \forall i\in [n-1] \right\}.
$$
The Kuhn's triangulation has the property that for each elementary simplex $S=\textnormal{conv}(\fx_1,\ldots,\fx_n)$, there exists a permutation $\phi: [n]\rightarrow [n]$ such that
\begin{equation}\label{eq::kuhn permutation}
       \fx_{\phi(i+1)} = \fx_{\phi(i)} + \frac{1}{2} \boldsymbol{e}^{\phi(i)}, \quad  \forall i \in [n-1],
\end{equation}
where $\boldsymbol{e}^{k}$ denotes the $k$-th standard unit vector (i.e., $k$-th coordinate is 1 and all others are 0).
This property ensures that the vertices of each elementary simplex can be ordered such that each subsequent vertex is generated by adding $\frac{1}{2}$ along a unique coordinate axis, with each axis being selected exactly once over the sequence.
For the purposes of this work, we focus on $S_m$ and Kuhn's triangulation.


For a vertex $\fx=(x^1,\ldots,x^{n-1}) \in V(\mathrm{T}_{half})$, each coordinate $x^i$ functions as a partitioning threshold.
This vertex $\fx$ then induces a (partial) division $\cP(\fx)=(P_1(\fx),\ldots, P_n(\fx))$ of $m$ chores $v_1,\ldots,v_m$, where the subsets are defined as:
\begin{itemize}
    \item $P_1(\fx) = \{v_k \mid v_k < x^1\}$,
    \item for $j=2,\ldots,n-1$:
    $
    P_j(\fx) = \{ v_k \mid x^{j-1} < v_k < x^j \}
    $,
    \item $P_n(\fx) = \{v_k \mid v_k > x^{n-1}\}$.
\end{itemize}
The $\cP(\fx)$ can be a partial division, as not all of $v_1, \ldots, v_m$ are necessarily included in some $P_j(\fx)$.
In particular, if $x^j=v_k$ for some $j$ and $k$, then chore $v_k \notin P_\ell(\fx)$ for all $\ell$.
In this case, we say item $v_k$ is \emph{hidden} by knife $x^j$.


For a vertex $\fx \in V(\mathrm{T}_{half})$ and a bundle $P_j(\fx)$, chore $v_k$ is the \emph{left-boundary item} of $P_j(\fx)$ if $v_k= \lceil x^{j-1} \rceil$. 
We also use $\ell_j(\fx)$ to denote the left-boundary item of $P_j(\fx)$.
Note that $P_1(\fx)$ does not have a left-boundary item, as there is no $x^0$.
Chore $v_k$ is the \emph{right-boundary item} of $P_j(\fx)$ if $v_k= \lfloor x^{j} \rfloor$, and, in this case, we use $r_j(\fx)$ to denote chore $v_k$. 
The $P_n(\fx)$ does not have the right-boundary item, as there is no $x^n$.
We remark that if $\ell_j(\fx)$ is hidden by the $(j-1)$-th knife, then it is the right-boundary item of $P_{j-1}(\fx)$ and the left-boundary item of $P_j(\fx)$.
We say that item $v_k \in V$ \emph{fully appears} in $P_j(\fx)$ if $v_k\in P_j(\fx)$.

Recall that for every elementary simplex $S=\textnormal{conv}(\fx_1,\ldots,\fx_n)$ of $\mathrm{T}_{half}$, there exists a permutation $\phi:[n]\rightarrow [n]$ that satisfies Equation (\ref{eq::kuhn permutation}). 
Then $S$ induces an ordered sequence of partial divisions $\cP(\fx_{\phi(1)}), \cP(\fx_{\phi(2)}), \ldots$, $ \cP(\fx_{\phi(n)})$. 
In this sequence, for $t\geq 2$, each $\cP(\fx_{\phi(t)})$ is obtained from $\cP(\fx_{\phi(t-1)})$ 
by moving one of the knives in half-step, and the movement of each knife occurs exactly once across $n$ partial divisions.
Given elementary simplex $S=\textnormal{conv}(\fx_1,\ldots,\fx_n)$, the path can be decomposed into $(B_1,v^1,B_2,\ldots,v^{n-1},B_n)$, where each $B_j$ is the set of items that fully appear in every $P_j(\fx_t)$, and each $v^j$ is the boundary item that appears either $j$-th bundle or $(j+1)$-th bundle in the sequence of the partial division.
It is crucial to note that as each knife only moves once in a half-step, it never happens that $v^j$ fully appears in the $j$-th bundle of one partial division and fully appears in the $(j+1)$-th bundle of another partial division.
Figure~\ref{fig:kuhn-illustration} illustrates the boundary items, path decomposition, and the property of Kuhn's triangulation.

\begin{figure}[htbp]
    \centering 
    \includegraphics[width=0.48\textwidth]{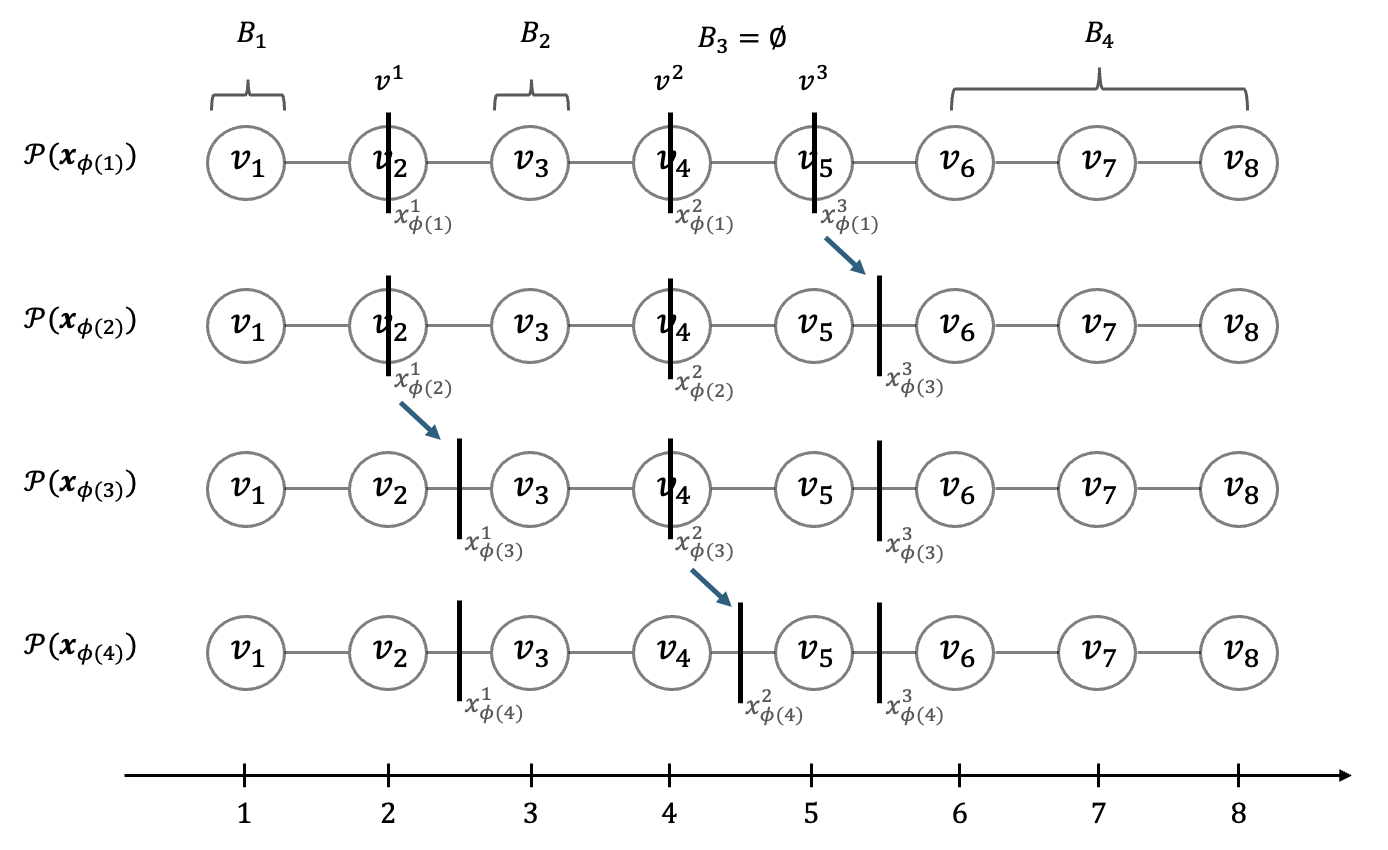} 
    \caption{Illustration of partial divisions of an elementary simplex of Kuhn's triangulation. From $\cP(\fx_{\phi(1)})$, moving the third knife in half-step to the right yields partition $\cP(\fx_{\phi(2)})$.
    Moving the first knife in half-step from $\cP(\fx_{\phi(2)})$ yields $\cP(\fx_{\phi(3)})$, and subsequently moving the second knife in half-step yields $\cP(\fx_{\phi(4)})$.
    Bundle $B_3$ is the empty set as no item fully appears every $P_3(\fx_t)$. For any $t\in [4]$, chore $v_4$ is the right-boundary item of $P_2(\fx_{\phi(t)})$ and is the left-boundary item of $P_3(\fx_{\phi(t)})$.}
    \label{fig:kuhn-illustration} 
\end{figure}

With these preliminary definitions established, we are now prepared to show the existence of  $\textnormal{EF1}_{outer}$ connected division for an arbitrary number of agents with monotone disutility functions.

The first step involves assigning ownership labels to the vertices of $V(\mathrm{T}_{half})$. 
This assignment must ensure that the $n$ vertices within any elementary simplex bear $n$ distinct labels. The existence of such an assignment is guaranteed because $\mathrm{T}_{half}$ is Kuhn's triangulation \cite{DBLP:journals/ior/DengQS12}.



\paragraph{Coloring.}
The second step involves agents coloring the vertices they own.
In Bil{\`{o}} et al. \shortcite{DBLP:journals/geb/BiloCFIMPVZ22} and Igarashi \shortcite{DBLP:conf/aaai/Igarashi23}, the coloring is based on virtual valuations, which are not necessarily identical to agents' valuations.
In this work, we adopt their approach by having agents color vertices based on preferences that differ from their actual disutility functions $u_i$'s.
However, the virtual valuations in \cite{DBLP:journals/geb/BiloCFIMPVZ22,DBLP:conf/aaai/Igarashi23} are designed for cake-cutting and are not applicable to the chore division problem.
Thus, we must define new virtual functions to establish the existence of $\textnormal{EF1}_{outer}$ for chores.

In this work, we define the \emph{virtual disutility} functions $\{\hat{u}_i\}_i$, based on which agents color their vertices. For each agent $i\in [n]$, her virtual function $\hat{u}_i(\fx,j)$ takes as input $\fx\in V(\mathrm{T}_{half})$ and an index $j\in [n]$, returning a non-negative real number.
Informally, $\hat{u}_i(\fx,j)$ represents agent $i$'s virtual disutility for $P_j(\fx)$, and depends on not only the elements in $P_j(\fx)$ but also the boundary items and positions of the $(j-1)$-th and $j$-th knives.


Before formally defining $\hat{u}_i(\fx,j)$, we first introduce the \emph{up-to-one} disutility function $u^-_i:\mathcal{C}(V)\rightarrow \mathbb{R}_{\geq 0}$.
For any $S \in \mathcal{C}(V)$, it is defined as 
\[
u^-_i(S) = \min \left\{ u_i(S\setminus \{v\}) : v\in S \textnormal{ and } S\setminus \{v\} \in \mathcal{C}(V) \right\}.
\]
In particular, when $S$ is the empty set, $u^-_i(S)=0$. 

We now fix agent $i$ and a vertex $\fx$, and formally define virtual functions.
For $j=1$ and $j=n$, $\hat{u}_i(\fx,j) = u_i(P_j(\fx))$. An agent receiving an exterior subset (leftmost or rightmost) does not consider any chore hidden by some knife.


For $j=2,\ldots,n-1$, the virtual function $\hat{u}_i(\fx, j)$ is defined as:
\[
\begin{cases} u_i^{-}\left(P_j(\boldsymbol{x}) \cup \{v^L\} \right) & \text {if } x^{j-1} <\ell_j(\boldsymbol{x}) \text { and } x^j > r_j(\boldsymbol{x}), \\ 
u^-_i\left(P_j(\boldsymbol{x}) \cup \left\{\ell_j(\boldsymbol{x})\right\}\right) & \text {if } x^{j-1} = \ell_j(\boldsymbol{x}) \text { and } x^j> r_j(\boldsymbol{x}), \\ 
u_i\left(P_j(\boldsymbol{x})\right) & \text {if } x^{j-1}<\ell_j(\boldsymbol{x})\text { and } x^j = r_j(\boldsymbol{x}), \\
u_i\left(P_j(\boldsymbol{x})\right) & \text {if } x^{j-1}=\ell_j(\boldsymbol{x}) \text { and } x^j = r_j(\boldsymbol{x}) ,\end{cases}
\]
where $v^L \notin P_j(\fx)$ is the chore to the left of $P_j(\fx)$ such that $\{v^L\}\cup P_j(\fx)$ is connected.
One can introduce a dummy chore $v_0$ immediately to the left of $v_1$, with disutility 0 for every agent, so that $v^L$ always exists.
Moreover, due to Kuhn's triangulation, $x^{j-1}<\ell_j(\fx)$ implies $x^{j-1}= \ell_j(\fx)-\frac{1}{2} $ and $x^j>r_j(\fx)$ implies $x^j=r_j(\fx)+\frac{1}{2}$.

Compared our virtual functions with those in \cite{DBLP:conf/aaai/Igarashi23}, although our $\hat{u}_i(\fx, j)$ handles similar cases, its definitions differ from theirs in each of the four cases.
Moreover, the virtual functions in \cite{DBLP:conf/aaai/Igarashi23} use only information about $P_j(\mathbf{x})$ and its boundary items, while our $\hat{u}_i(\fx, j)$ requires the agent to take $v^L$, an item that is outside $P_j(\mathbf{x})$ and not one of its boundary items




The coloring function of agent $i$ is $\lambda_i : V(\mathrm{T}_{half}) \rightarrow 2^{[n]}$, defined as:
$$
\lambda_i=\arg\min \left\{ \hat{u}_i(\fx,j) : j\in [n] \right\}.
$$
For any $\fx$, there always exists some $j$ achieving the minimum value of $\hat{u}_i(\fx,j)$, so each agent can identify her most preferred bundle. This satisfies the ``Good House'' condition in Su \shortcite{Su01121999}.
We then verify ``Miserly Tenants'' condition in Su \shortcite{Su01121999}, which requires that no agent strictly prefers a piece of positive length to a piece of zero length.
If an agent receives the left or right exterior bundle of zero length, then the corresponding $P_1(\fx)$ or $P_n(\fx)$ is empty, and the virtual disutility is 0.
Moreover, for any $\fx$ and $j=2,\ldots,n-1$ with $x^{j-1}=x^j$, one can verify that $\hat{u}_i(\fx,j)=0$.
The coloring function $\{\lambda_i\}_i$ satisfies ``Miserly Tenants'' condition, and therefore forms a Sperner's coloring.


\paragraph{Rounding.}  
We next introduce how to round an elementary simplex $S=\textnormal{conv}(\fx_1,\ldots,\fx_n)$ into a complete division $\cP^*=(P^*_1,\ldots,P^*_n)$.
Under the framework of Simmons-Su, the rounding procedure depends on the coloring functions, and thus, is not independent of the virtual functions.
Since our virtual functions differ from those in \cite{DBLP:journals/geb/BiloCFIMPVZ22,DBLP:conf/aaai/Igarashi23}, and due to the inherent differences between goods and chores, we must design a new rounding algorithm.



Recall that vertices $\fx_1,\ldots,\fx_n$ decompose the path as $(B_1,v^1,B_2,\ldots,v^{n-1}$, $B_n)$, 
where each $B_j$ is the set of items that fully appear in $P_j(\fx_t)$ for all $t$, i.e., $B_j=\bigcap_{t\in[n]}P_j(\fx_t)$.
The $v^j$'s are items hidden by knives.

Our rounding algorithm is described as follows. For each $j\in [n]$, first initialize $P^*_j$ as $B_j$.
It then remains to allocate the hidden items $v^1,\ldots,v^{n-1}$.
Let us begin with the left-most hidden item $v^1$. If $v^1$ fully appears in the first bundle of some $\fx_t$, i.e., $v^1 \in P_1(\fx_t)$ for some $\fx_t$, then allocate $v^1$ to $P^*_1$.
Next, we allocate $v^j$ with $j=2,\ldots,n-1$ one by one. Suppose that we are at the round of allocating $v^j$. When $v^{j-1}\neq v^j$, if $v^{j-1}$ has not allocated yet, then assign it to $P^*_j$ (Line \ref{step:alg-ef1-unallocate}).
For item $v^j$, assign it to $P_j^*$ if (1) $v^j$ fully appears in the $j$-th bundle of some $\fx_t$ and $v^{j-1}\notin P^*_j$, or (2) $v^j$ fully appears in the $j$-th bundle of some $\fx_t$, $v^{j-1}\in P^*_j$, and there is no $\fx_\ell$ satisfying $x^{j-1}_{\ell} = v^{j-1}$ and $x^{j}_{\ell}  =v^j$ (Line \ref{step:alg-ef1-v^j-condition}).
When $v^{j-1}=v^j$, it trivially holds that $B_j=\emptyset$. Assign $v^j$ to $P^*_j$ if $v^j$ (equivalently $v^{j-1}$) is unallocated and $v^j$ fully appears in the $j$-th bundle of some $\fx_t$ (Line \ref{step:alg-ef1-same}).
The rounding is formally presented in Algorithm \ref{alg:rounding-ef1}.



\begin{algorithm}[h]
	\caption{Rounding an elementary simplex}
	\label{alg:rounding-ef1}
 \renewcommand{\algorithmicensure}{\textbf{Output:}}
	\begin{algorithmic}[1]
		\REQUIRE An elementary simplex $S=\textnormal{conv}(\fx_1,\ldots,\fx_n)$.
		\ENSURE A complete division $\cP^*=(P_1^*,\ldots,P^*_n)$ of the path.
        \STATE For each $j\in [n]$, let $B_j=P_j(\fx_1)\cap P_j(\fx_2) \cap \cdots \cap P_j(\fx_n)$ and let $v^j$ be the item such that $v^j=x^j_t$ for some $t\in [n]$.
        \STATE For each $j\in [n]$, initialize $P_j^* \gets B_j$.
        \IF{item $v^1 \in P_1(\fx_t)$ for some $\fx_t$}
        \STATE $P^*_1 \gets P^*_1\cup\{ v^1 \}$.
        \ENDIF
        \FOR{$j=2,\ldots,n-1$}
        \IF{$v^{j-1} \neq v^j$} 
        \STATE If $v^{j-1}$ is unallocated, i.e., $v^{j-1}\notin \bigcup_{i \in [n]} P^*_i$, 
        then $P^*_j \gets P^*_j \cup\{v^{j-1}\}$.\label{step:alg-ef1-unallocate}
        \STATE Allocate $v^j$ to $P^*_j$ if one of the following two conditions is satisfied: (1) $v^j \in P_j(\fx_t)$ for some $\fx_t$ and $v^{j-1}\notin P^*_j$, or 
        (2) $v^j \in P_j(\fx_t)$ for some $\fx_t$, $v^{j-1}\in P^*_j$, and there is no $\fx_\ell $ such that $x^{j-1}_{\ell} = v^{j-1}$ and $x^{j}_{\ell}  =v^j$. \label{step:alg-ef1-v^j-condition}
        \ENDIF
        \IF{$v^{j-1}=v^j$ and $v^{j-1}$ is unallocated}
        \STATE Allocate $v^j$ to $P^*_j$ if $v^j \in P_j(\fx_t)$ for some $\fx_t$.\label{step:alg-ef1-same}
        \ENDIF
        \ENDFOR
        \IF{$v^{n-1}$ is unallocated}
        \STATE $P^*_n \gets P^*_n\cup \{v^{n-1}\}$.
        \ENDIF
  
	\end{algorithmic}
\end{algorithm}

We next present a crucial lemma for establishing the main result.

\begin{lemma}\label{lem::ef1}
    Let $S=\textnormal{conv}(\fx_1,\ldots,\fx_n)$ be an elementary simplex of $\mathrm{T}_{half}$ and let $\cP^*=(P^*_1,\ldots,P^*_n)$ be the division returned by Algorithm \ref{alg:rounding-ef1} with input $S$.
    For any agent $i$, any bundle $P^*_j$, and any $\fx_k$, it holds that
    $$
    u^-_i(P^*_j) \leq \hat{u}_i(\fx_k,j) \leq u_i(P^*_j).
    $$
\end{lemma}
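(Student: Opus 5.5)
The plan is to reduce both inequalities to set containments and then use monotonicity of $u_i$ and of $u^-_i$. First, $u^-_i$ is monotone on connected sets: if $S\subseteq T$ are connected, then for any outer item $w$ of $T$ the set $T\setminus\{w\}$ contains $S$ or $S$ minus an outer item of $S$. Hence $u^-_i(S)\le u^-_i(T)$, and trivially $u^-_i(S)\le u_i(S)$. Let $Q_j(\fx_k)$ denote the set inside $u_i$ or $u^-_i$ in the definition of $\hat u_i(\fx_k,j)$. It then suffices to prove the following for every $k$:
\begin{itemize}
\item whenever $\hat u_i(\fx_k,j)=u_i(Q)$, we have $Q\subseteq P^*_j$ and $P^*_j\setminus Q$ consists of at most one outer item of $P^*_j$;
\item whenever $\hat u_i(\fx_k,j)=u^-_i(Q)$, we have $Q\supseteq P^*_j$ and $Q\setminus P^*_j$ consists of at most one outer item of $Q$.
\end{itemize}

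The first case gives $u^-_i(P^*_j)\le u_i(Q)\le u_i(P^*_j)$. The second gives $u^-_i(P^*_j)\le u^-_i(Q)$ by monotonicity, and $u^-_i(Q)\le u_i(P^*_j)$ by removing the extra item. We must also check that $P^*_j\subseteq B_j\cup\{v^{j-1},v^j\}$ is connected. This holds because Algorithm~\ref{alg:rounding-ef1} assigns each hidden item to one of its two neighbouring bundles.

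The exterior bundles are handled first. For $j=1$, $P_1(\fx_k)$ equals $B_1$ or $B_1\cup\{v^1\}$, and $P^*_1$ contains $v^1$ exactly when some vertex shows it fully. So $P_1(\fx_k)\subseteq P^*_1\subseteq P_1(\fx_k)\cup\{v^1\}$, with $v^1$ outer. For $j=n$, $P^*_n$ gets $v^{n-1}$ only if it is still unallocated, and $P_n(\fx_k)\subseteq B_n\cup\{v^{n-1}\}$. We must show that if $v^{n-1}\in P_n(\fx_k)$ for some $k$, then $v^{n-1}$ is not taken by $P^*_{n-1}$. This follows because the $(n-1)$-th knife moves only once, so $v^{n-1}$ cannot fully appear in both adjacent bundles (the property stated after equation~(\ref{eq::kuhn permutation})). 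That rules out assignment via Line~\ref{step:alg-ef1-v^j-condition}.

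For the interior bundles $2\le j\le n-1$, I will split by which of the four cases of $\hat u_i$ applies at $\fx_k$, and by which hidden items Algorithm~\ref{alg:rounding-ef1} placed in $P^*_j$.

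\emph{Cases 3 and 4} ($x^j_k=r_j(\fx_k)$, so $v^j$ is hidden at $\fx_k$). Here $Q=P_j(\fx_k)$, which is $B_j$ or $B_j\cup\{v^{j-1}\}$. We need $Q\subseteq P^*_j$, i.e., if $v^{j-1}$ fully appears in bundle $j$, it was not given to $P^*_{j-1}$. This again uses the single-move property. Adding $v^j$ to $P^*_j$ adds at most one outer item. If both $v^{j-1}$ and $v^j$ get added while $v^{j-1}$ is hidden at $\fx_k$, that would be two extra items. Condition~(2) of Line~\ref{step:alg-ef1-v^j-condition} excludes exactly this: when some vertex hides both $v^{j-1}$ and $v^j$, at most one of them is added.

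\emph{Cases 1 and 2} ($v^j$ fully appears at $\fx_k$). Here $Q$ is $P_j(\fx_k)$ plus one item on the left: $v^{j-1}$ itself, or, when $v^{j-1}$ fully appears at $\fx_k$, the item $v^L$ to the left of it. In the latter case $v^L$ is $v^{j-1}-1$, which is the last item of $B_{j-1}$ or $v^{j-2}$ when $B_{j-1}=\emptyset$. I must show $P^*_j\subseteq Q$ with at most one outer item to spare. Since $v^j$ appears in bundle $j$ at $\fx_k$, it cannot go right, so it lies in $P^*_j$ or $P^*_{j+1}$ only by the rules. The right side of $Q$ is then fine, and the left item of $Q$ is the single removable outer item. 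The subcase $v^{j-1}=v^j$ (so $B_j=\emptyset$), handled by Line~\ref{step:alg-ef1-same}, needs a separate quick check.

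I expect the main obstacle to be the bookkeeping of Line~\ref{step:alg-ef1-v^j-condition} against the Kuhn ordering. That is, I must verify in every case that $P^*_j$ never receives both $v^{j-1}$ and $v^j$ when some vertex lies in case 3 or 4 with both hidden, and that items shown fully in bundle $j$ at $\fx_k$ are never diverted to a neighbour. I would handle this by enumerating the relative order, in the permutation $\phi$, of the moves of knives $j-1$ and $j$.
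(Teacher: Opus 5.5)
\textbf{There is a genuine gap in the interior-bundle case.} Your reduction to the two containment conditions is sound, as are the monotonicity of $u^-_i$ and your handling of $j=1$ and $j=n$. For $j=n$ you should also exclude Lines 4 and 12, but the same single-move argument covers them. The gap is for $2\le j\le n-1$, which is where the lemma's content lies.

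\textbf{Cases 1--2: a false claim and a missing argument.} You assert that the left end of $Q$ is always the spare item. Your last paragraph adds that an item shown fully in bundle $j$ at $\fx_k$ is never diverted to a neighbour. Both claims are false for $v^j$. Take $n=3$, $j=2$ and the simplex with vertices $(v^1-\frac12,v^2)$, $(v^1,v^2)$, $(v^1,v^2+\frac12)$. Since $v^1$ never lies in bundle $1$, Line 8 puts it into $P^*_2$. Condition (1) of Line 9 then fails, and condition (2) fails because the middle vertex hides both items. So $v^2$ goes to $P^*_3$, although $v^2\in P_2$ at the third vertex. There $Q=\{v^1\}\cup B_2\cup\{v^2\}$, and the spare item is the right end, not the left.

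What is true is only that $v^j$ can be diverted when $v^{j-1}\in P^*_j$ (otherwise condition (1) fires). That settles every configuration in which the left end of $Q$ is $v^{j-1}$. It does not settle $x^{j-1}_k=v^{j-1}-\frac12$, $x^j_k=v^j+\frac12$. There $Q=\{v^L,v^{j-1}\}\cup B_j\cup\{v^j\}$ and $v^L\notin P^*_j$, so you must show both hidden items land in $P^*_j$.

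The missing idea is that, by the Kuhn property, the vertices of an elementary simplex form a chain under the coordinatewise order. So no vertex can have knife $j-1$ higher and knife $j$ lower than at $\fx_k$. Hence no vertex hides both $v^{j-1}$ and $v^j$, condition (2) fires, and $P^*_j=Q\setminus\{v^L\}$. This is the paper's subcase (3.2b).

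\textbf{Cases 3--4: a missed configuration.} You treat $Q=B_j$ only when $\fx_k$ itself hides $v^{j-1}$. The configuration $x^{j-1}_k=v^{j-1}+\frac12$, $x^j_k=v^j$ also gives $Q=B_j$. Here $v^{j-1}$ can still end up in $P^*_j$: $P^*_{j-1}$ may refuse it under condition (2) in round $j-1$, and Line 8 then picks it up. Since $\fx_k$ does not hide $v^{j-1}$, your appeal to condition (2) does not apply. You need the chain property again:
\begin{itemize}
\item If $v^j\in P^*_j$, knife $j$ takes only the values $v^j$ and $v^j+\frac12$.
\item The vertex hiding $v^{j-1}$ lies below $\fx_k$ coordinatewise, so it has knife $j$ at $v^j$ and hides both items.
\item With $v^{j-1}\in P^*_j$, both conditions of Line 9 then fail, so $v^j$ would have been refused.
\end{itemize}
This is the paper's subcase (4e).

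Your closing remark about ordering the moves of knives $j-1$ and $j$ in $\phi$ names the right tool. But these two chain facts are what it must deliver, not the no-diversion property, which is false.

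\textbf{A smaller omission.} In Cases 1--2 of $\hat u_i$, knife $j$ may sit at $v^j-\frac12$. That subcase is easy, since then $v^j\notin P^*_j$. Your phrase ``$v^j$ appears in bundle $j$'' does not cover it.
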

\begin{proof}[Partial proof]
    We first claim that for each $j\in [n-1]$, $v^{j-1}\leq v^j$. Assume for the contradiction that $v^{j'-1} > v^{j'}$ for some $j' \in [n-1]$.
    Then there exists $k',k''$ such that $x^{j'-1}_{k'}=v^{j'-1} > v^{j'} = x^{j'}_{k''}$. Since $v^{j'-1}$ and $ v^{j'}$ are integers, we have $x^{j'-1}_{k'} \geq x^{j'}_{k''} + 1 > x^{j'}_{k'} $, where the last inequality transition is due to the property of the $\mathrm{T}_{half}$.
    This yields a contradiction. 
    Below, we split the proof based on the value of $j$ and the possible cases of $P^*_j$.

    \paragraph{For $j=1$.} By Algorithm \ref{alg:rounding-ef1}, bundle $P^*_1$ can possibly be $B_1$ or $B_1\cup\{v^1\}$. The arguments below also hold when $B_1=\emptyset$.
    
    \medskip
    \emph{Case 1:} $P^*_1=B_1$. 
    By Algorithm \ref{alg:rounding-ef1}, chore $v^1$ does not fully appear in any $P_1(\fx_t)$, and hence, $x^1_t \in \{ v^1-\frac{1}{2}, v^1\}$ for all $t$.
    By the definition of $\hat{u}_i$, for both cases of $x^1_k=v^1-\frac{1}{2}$ and $x^1_k=v^1$, we have $\hat{u}_i(\fx_k,1)=u_i(B_1)$.
    Therefore, $u^-_i(P^*_1)\leq \hat{u}_i(\fx_k,1) = u_i(P^*_1)$ holds.

    \medskip
    \emph{Case 2:} $P^*_1=B_1\cup\{v^1\}$. By Algorithm \ref{alg:rounding-ef1}, chore $v^1$ fully appears in some $P_1(\fx_t)$, and hence, $x^1_t \in \{v^1, v^1 +\frac{1}{2}\}$ for all $t$.
    Next, we compute possible values of $\hat{u}_i(\fx_k,1)$ and prove that the desired inequalities hold for every possible value.

    (2a): $x^1_k=v^1$. We have $\hat{u}_i(\fx_k,1) = u_i(B_1)$.

    (2b): $x^1_k=v^1+\frac{1}{2}$. We have $\hat{u}_i(\fx_k,1) = u_i(B_1\cup\{v^1\})$.
    \begin{figure}[htbp] 
    \centering 
    \includegraphics[width=0.34\textwidth]{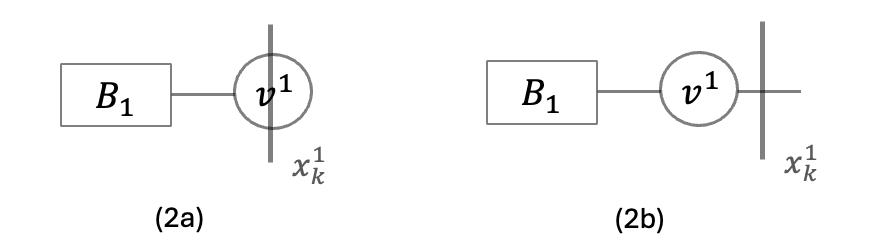} 
    \caption{Illustration of possible cases for $j=1$ and Case 2.}
    \label{fig:j=1-2} 
\end{figure}
    
    We illustrate the two subcases in Figure \ref{fig:j=1-2}. As $v^1$ is the right-most chore of $P^*_1$, we have $u^-_i(P^*_1)\leq u_i(B_1)$. Therefore, it always holds that $u^-_i(P^*_1)\leq \hat{u}_i(\fx_k,1) = u_i(P^*_1)$.
    

    
    \paragraph{For $j=n$.} By Algorithm \ref{alg:rounding-ef1}, bundle $P^*_n$ can possibly be $B_n$ or $\{v^{n-1}\} \cup B_n$. 
    The arguments below also hold when $B_n=\emptyset$.
    
    \emph{Case 1:} $P^*_n=B_n$.
    By Algorithm \ref{alg:rounding-ef1}, chore $v^{n-1}$ fully appears in some $P_{n-1}(\fx_t)$, and hence, $x^{n-1}_{t} \in \{v^{n-1}, v^{n-1}+\frac{1}{2}\}$ for all $t$.
    By the definition of $\hat{u}_i$, for both cases of $x^{n-1}_k=v^{n-1}$ and $x^{n-1}_k=v^{n-1}+\frac{1}{2}$, we have $\hat{u}_i(\fx_k,1)=u_i(B_n)$.
    Therefore, $u^-_i(P^*_n)\leq \hat{u}_i(\fx_k,n)=u_i(P^*_n)$ always holds.

    \emph{Case 2:} $P^*_n=\{v^{n-1}\} \cup B_n$. 
    As $v^{n-1}$ is hidden by some $x^{n-1}_{t'}$, it holds that $x^{n-1}_t \in \{ v^{n-1}-\frac{1}{2}, v^{n-1}, v^{n-1}+\frac{1}{2} \}$ for all $t$ due to the property of Kuhn's triangulation.
    Below, we compute possible values of $\hat{u}_i(\fx_k,1)$ and prove that the desired inequalities hold for all possible values of $\hat{u}_i(\fx_k,1)$.


    (2a): $x^{n-1}_k= v^{n-1}-\frac{1}{2}$. We have $\hat{u}_i(\fx_k,n) = u_i(\{v^{n-1}\} \cup B_n )$.

    (2b): $x^{n-1}_k= v^{n-1}$. We have $\hat{u}_i(\fx_k,n)=u_i(B_n)$.

    (2c): $x^{n-1}_k= v^{n-1}+ \frac{1}{2}$. We have $\hat{u}_i(\fx_k,n)=u_i(B_n)$.

    \begin{figure}[htbp] %
    \centering 
    \includegraphics[width=0.44\textwidth]{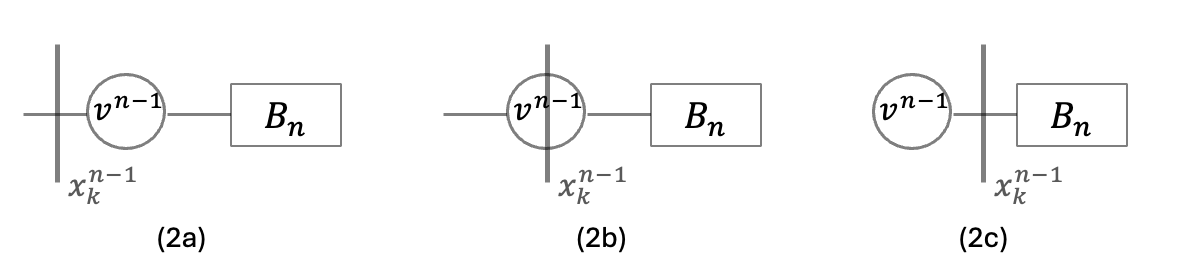} 
    \caption{Illustration of possible cases for $j=n$ and Case 2.}
    \label{fig:j=n} 
\end{figure}

    We illustrate the three subcases in Figure \ref{fig:j=n}.
    As $v^{n-1}$ is the left-most item of $P^*_n$, it holds that $u^-_i(P^*_n)\leq u_i(B_n)$. Since $\hat{u}_i(\fx_k,n)$ is either $u_i(\{v^{n-1}\} \cup B_n )$ or $u_i(B_n)$, the inequality $u^-_i(P^*_n)\leq \hat{u}_i(\fx_k,1) = u_i(P^*_n)$ always holds.

    \paragraph{For $j=2,\ldots,n-1$ and $v^{j-1}<v^j$.}
    By Algorithm \ref{alg:rounding-ef1}, $P^*_j$ can possibly be $B_j$, $B_j\cup\{v^j\}$, $\{v^{j-1}\}\cup B_j$, and $\{v^{j-1}\}\cup B_j \cup \{v^j\}$.
    
    
    \emph{Case 1:} $P^*_j=B_j$. By Algorithm \ref{alg:rounding-ef1}, chore $v^{j-1}$ fully appears in some $P_{j-1}(\fx_{t'})$ and $v^j$ does not fully appear in any $P_{j}(\fx_{t})$.
    Thus, $x^{j-1}_t \in \{v^{j-1}, v^{j-1}+\frac{1}{2}\}$ and $x^j_t \in \{ v^j-\frac{1}{2}, v^j\}$ for all $t$. 
    We compute possible values of $\hat{u}_i(\fx_k,j)$ and prove that the desired inequalities always hold.

    (1a): $x^{j-1}_k=v^{j-1}$ and $x^j_k=v^j-\frac{1}{2}$. We have $\hat{u}_i(\fx_k,j)=u^-_i(\{v^{j-1}\} \cup B_j) $


    (1b): $x^{j-1}_k=v^{j-1}$ and $x^j_k=v^j$. We have $\hat{u}_i(\fx_k,j) = u_i(B_j)$.


    (1c): $x^{j-1}_k=v^{j-1}+\frac{1}{2}$ and $x^j_k=v^j-\frac{1}{2}$. 
    We have $\hat{u}_i(\fx_k,j) = u^-_i(\{v^{j-1} \} \cup B_j)$.


    (1d): $x^{j-1}_k=v^{j-1}+\frac{1}{2}$ and $x^j_k=v^j$. We have $\hat{u}_i(\fx_k,j) = u_i(B_j)$.

    \begin{figure}[htbp] %
    \centering 
    \includegraphics[width=0.48\textwidth]{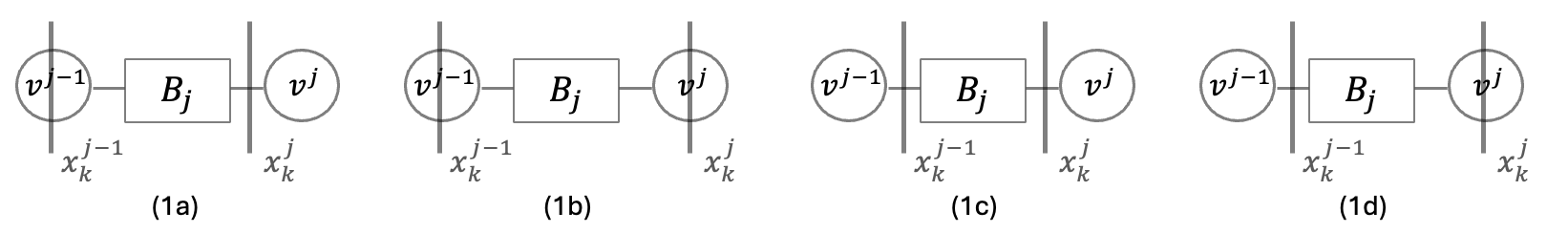} 
    \caption{Illustration of possible cases for $j=2,\ldots,n-1$ and Case 1.}
    \label{fig:J1} 
\end{figure}
    
    We illustrate the four subcases in Figure \ref{fig:J1}. Since $\hat{u}_i(\fx_k,j)$ is either $u^-_i(\{v^{j-1} \} \cup B_j)$ or $u_i(B_j)$, we have $u^-_i(B_j) \leq \hat{u}_i(\fx_k,j)$.
    Moreover, as $v^{j-1}$ is the left-most item of $\{v^{j-1}\}\cup B_j$, we have $u^-_i(\{v^{j-1} \} \cup B_j) \leq u_i(B_j)$.
    Therefore, $u^-_i(P^*_j) \leq \hat{u}_i(\fx_k,j) \leq u_i(P^*_j)$ always holds.

    \emph{Case 2: $P^*_j=B_j\cup\{v^j\}$}. By Algorithm \ref{alg:rounding-ef1}, chore $v^{j-1}$ fully appears in some $P_{j-1}(\fx_{t})$ and $v^j$ fully appears in some $P_j(\fx_{t'})$.
    Thus, $x^{j-1}_t \in \{v^{j-1}, v^{j-1}+\frac{1}{2}\}$ and $x^j_t \in \{ v^j, v^j+\frac{1}{2}\}$ for all $t $.
    We compute possible values of $\hat{u}_i(\fx_k,j)$ and then prove the desired inequalities.

    (2a): $x^{j-1}_k=v^{j-1}$ and $x^j_k=v^j$. We have $\hat{u}_i(\fx_k,j)=u_i(B_j)$.
    

    (2b): $x^{j-1}_k=v^{j-1}$ and $x^j_k=v^j+\frac{1}{2}$. 
    We have $\hat{u}_i(\fx_k,j) = u^-_i(\{v^{j-1}\}\cup B_j \cup \{v^j\})$.
    

    (2c): $x^{j-1}_k=v^{j-1}+\frac{1}{2}$ and $x^j_k=v^j$. We have $\hat{u}_i(\fx_k,j)=u_i(B_j)$.


    (2d): $x^{j-1}_k=v^{j-1}+\frac{1}{2}$ and $x^j_k=v^j+\frac{1}{2}$. We have $\hat{u}_i(\fx_k,j) = u^-_i(\{v^{j-1}\}\cup B_j \cup \{v^j\})$.

    \begin{figure}[htbp] %
    \centering 
    \includegraphics[width=0.48\textwidth]{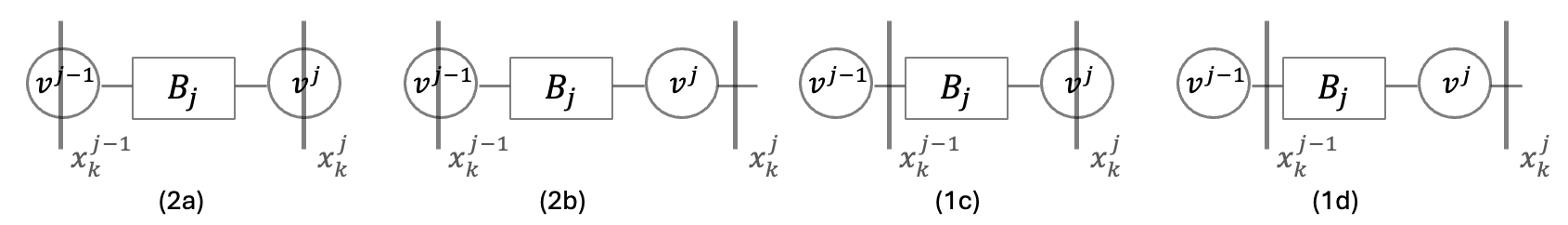} 
    \caption{Illustration of possible cases for $j=2,\ldots,n-1$ and Case 2.}
    \label{fig:J2} 
\end{figure}

    We illustrate the four subcases in Figure \ref{fig:J2}.
    Note that $\hat{u}_i(\fx_k,j)$ is either $u^-_i(\{v^{j-1}\}\cup B_j \cup \{v^j\})$ or $u_i(B_j)$.
    Since $v^j$ is the right-most item of $P^*_j$, it holds that $u^-_i(P^*_j) \leq u_i(B_j)$.
    Moreover, as $u^-_i(\{v^{j-1}\}\cup B_j \cup \{v^j\}) \geq u^-_i(B_j \cup \{v^j\})$, 
    it always holds that $\hat{u}_i(\fx_k,j) \geq u^-_i(P^*_j)$.
    Furthermore, since $v^{j-1}$ is the left-most item of $\{v^{j-1}\}\cup B_j \cup \{v^j\}$, we have $u^-_i(\{v^{j-1}\}\cup B_j \cup \{v^j\}) \leq u_i(B_j\cup \{v^j\})$.
    Therefore, $u^-_i(P^*_j) \leq \hat{u}_i(\fx_k,j) \leq u_i(P^*_j)$ always holds.
    
    \emph{Case 3:} $P^*_j=\{v^{j-1}\}\cup B_j$. For values of $x^{j-1}_k$ and $x^k_j$, we have two situations depending on whether $v^j$ fully appears in some $P_j(\fx_t)$ or not: 
    (1) $v^j$ does not fully appears in any $P_j(\fx_t)$, and (2) $v_j$ fully appears in some $P_j(\fx_{t})$ and there exists some $k'$ such that $x^{j-1}_{k'}=v^{j-1}$ and $x^j_{k'} = v^j$.

    \emph{Case 3.1:} $v^j$ does not fully appears in any $P_j(\fx_t)$. Then, $x^{j-1}_t\in \{v^{j-1}-\frac{1}{2}, v^{j-1}, v^{j-1} +\frac{1}{2} \}$ and $x^j_t\in \{v^j-\frac{1}{2}, v^j\}$ for all $t$.
    We compute possible values of $\hat{u}_i(\fx_k,j)$ and prove the desired inequalities.

    (3.1a) $x^{j-1}_k =v^{j-1}-\frac{1}{2}$ and $x^j_k=v^j-\frac{1}{2}$. We have $\hat{u}_i(\fx_k,j)=u^-_i(\{v^L\} \cup \{v^{j-1}\} \cup B_j )$, where $v^L \notin P_j(\fx_k)$ is the item to the left of $P_j(\fx_k)$ such that $\{v^L\}\cup P_j(\fx_k)$ is connected. In this case, $P_j(\fx_k)=\{v^{j-1}\}\cup B_j$.


    (3.1b) $x^{j-1}_k =v^{j-1}-\frac{1}{2}$ and $x^j_k=v^j$. We have $\hat{u}_i(\fx_k,j)=u_i(\{v^{j-1}\} \cup B_j)$.

    (3.1c) $x^{j-1}_k =v^{j-1}$ and $x^j_k=v^j-\frac{1}{2}$. We have $\hat{u}_i(\fx_k,j)=u^-_i(\{v^{j-1}\} \cup B_j) $.

    (3.1d) $x^{j-1}_k =v^{j-1}$ and $x^j_k=v^j$. We have $\hat{u}_i(\fx_k,j)=u_i(B_j)$. 

    (3.1e) $x^{j-1}_k =v^{j-1}+\frac{1}{2}$ and $x^j_k=v^j-\frac{1}{2}$. We have $\hat{u}_i(\fx_k,j)=u^-_i(\{v^{j-1}\} \cup B_j) $.

    (3.1f) $x^{j-1}_k =v^{j-1}+\frac{1}{2}$ and $x^j_k=v^j$. We have $\hat{u}_i(\fx_k,j)=u_i(B_j)$.

    \begin{figure}[htbp] %
    \centering 
    \includegraphics[width=0.48\textwidth]{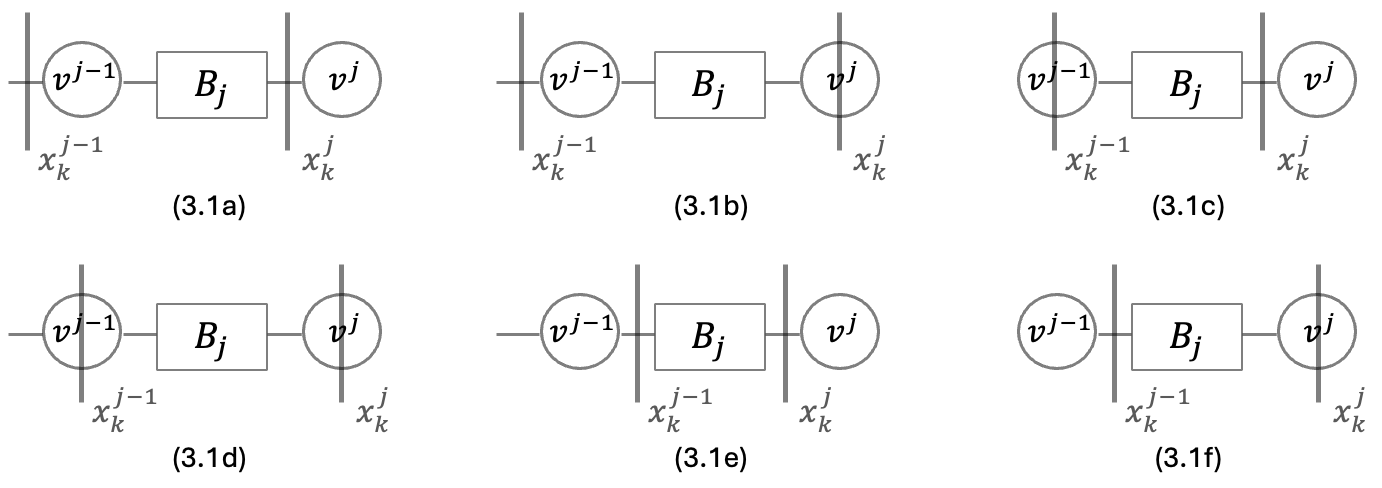} 
    \caption{Illustration of possible cases for $j=2,\ldots,n-1$ and Case 3.1.}
    \label{fig:J3.1} 
\end{figure}

   We illustrate the six subcases in Figure \ref{fig:J3.1}. The possible values of $\hat{u}_i(\fx_k,j)$ are $u_i(B_j)$, $u^-_i(\{v^{j-1}\} \cup B_j)$, $u_i(\{v^{j-1}\} \cup B_j)$, and $u^-_i(\{v^L\} \cup \{v^{j-1}\} \cup B_j )$.
    As $v^{j-1}$ is the left-most item of $P^*_j$, we have $u_i^-(P^*_j) \leq u_i(B_j)$.
    Thus, it is not hard to see that $u_i^-(P^*_j) \leq \hat{u}_i(\fx_k,j)$ always holds. 
    Moreover, as $v^L$ is the left-most item of $\{v^L\} \cup \{v^{j-1}\} \cup B_j$, it holds that $u^-_i(\{v^L\} \cup \{v^{j-1}\} \cup B_j ) \leq u_i(\{v^{j-1}\} \cup B_j)$.
    Therefore, $u^-_i(P^*_j) \leq \hat{u}_i(\fx_k,j) \leq u_i(P^*_j)$ always holds.

    \medskip 

    \emph{Case 3.2:} $v^j$ fully appears in some $P_j(\fx_{t'})$ and there exists some $k'$ such that $x^{j-1}_{k'}=v^{j-1}$ and $x^j_{k'} = v^j$.
    Then $x^{j-1}_t\in \{v^{j-1}-\frac{1}{2}, v^{j-1}, v^{j-1} +\frac{1}{2} \}$ and $x^j_t\in \{v^j, v^j+\frac{1}{2}\}$ for all $t$.
    We compute possible values of $\hat{u}_i(\fx_k,j)$ and prove that the desired inequalities always hold.

    (3.2a) $x^{j-1}_k =v^{j-1}-\frac{1}{2}$ and $x^j_k=v^j$. We have $\hat{u}_i(\fx_k,j)=u_i(\{v^{j-1}\} \cup B_j)$. 

    (3.2b) $x^{j-1}_k =v^{j-1}-\frac{1}{2}$ and $x^j_k=v^j+\frac{1}{2}$. As there exists $k'$ such that $x^{j-1}_{k'}=v^{j-1}$ and $x^j_{k'} = v^j$, the positions of $(j-1)$-th and $j$-th knives of $\fx_k$ can never satisfy the condition of (3.2b); otherwise, violating Equation (\ref{eq::kuhn permutation}).

    (3.2c) $x^{j-1}_k =v^{j-1}$ and $x^j_k=v^j$. We have $\hat{u}_i(\fx_k,j)=u_i(B_j)$.

    (3.2d) $x^{j-1}_k =v^{j-1}$ and $x^j_k=v^j+\frac{1}{2}$. We have $\hat{u}_i(\fx_k,j)=u^-_i(\{v^{j-1}\}\cup B_j \cup \{v^j\})$.

    (3.2e) $x^{j-1}_k =v^{j-1}+\frac{1}{2}$ and $x^j_k=v^j$. We have $\hat{u}_i(\fx_k,j)=u_i(B_j)$.

    (3.2f) $x^{j-1}_k =v^{j-1}+\frac{1}{2}$ and $x^j_k=v^j+\frac{1}{2}$. We have $\hat{u}_i(\fx_k,j)=u^-_i(\{v^{j-1}\}\cup B_j \cup \{v^j\})$.

    \begin{figure}[htbp] %
    \centering 
    \includegraphics[width=0.48\textwidth]{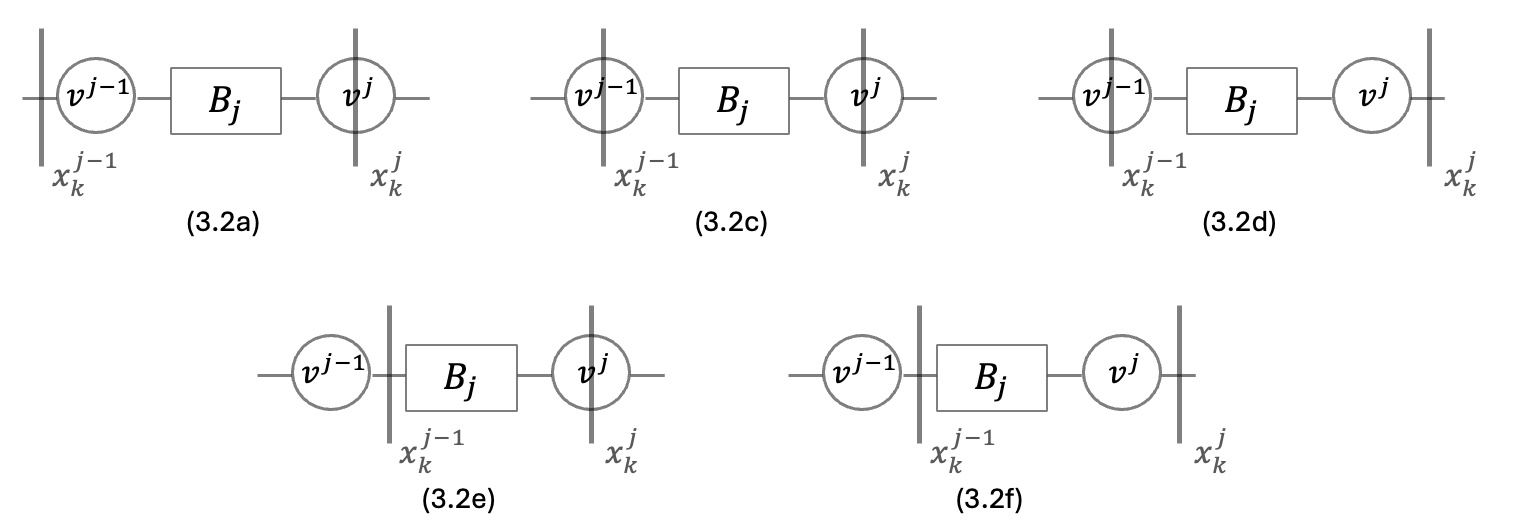} 
    \caption{Illustration of possible cases for $j=2,\ldots,n-1$ and Case 3.2.}
    \label{fig:J3.2} 
\end{figure}

   We illustrate the five possible subcases in Figure \ref{fig:J3.2}. The possible values of $\hat{u}_i(\fx_k,j)$ are $u_i(B_j)$ and $u^-_i(\{v^{j-1}\}\cup B_j \cup \{v^j\})$.
   As $v^j$ is the right-most item of $\{v^{j-1}\}\cup B_j \cup \{v^j\}$, it holds that $u^-_i(\{v^{j-1}\}\cup B_j \cup \{v^j\}) \leq u_i(\{v^{j-1}\}\cup B_j )$.
   Therefore, $u^-_i(P^*_j) \leq \hat{u}_i(\fx_k,j) \leq u_i(P^*_j)$ always holds.

    \medskip 

    \emph{Case 4:} $P^*_j=\{v^{j-1}\} \cup B_j \cup \{ v^j \}$. By Algorithm \ref{alg:rounding-ef1}, chore $v^j$ fully appears in some $P_j(\fx_t)$ and no $\fx_{k'}$ satisfies $x^{j-1}_{k'}=v^{j-1}$ and $x^j_{k'}=v^j$.
    Then $x^{j-1}_t\in \{v^{j-1}-\frac{1}{2}, v^{j-1}, v^{j-1} +\frac{1}{2} \}$ and $x^j_t\in \{v^j, v^j+\frac{1}{2}\}$ for all $t$.
    Next, we compute possible values of $\hat{u}_i(\fx_k,j)$ and prove the desired inequalities.

    (4a): $x^{j-1}_k =v^{j-1}-\frac{1}{2}$ and $x^j_k=v^j$. We have $\hat{u}_i(\fx_k,j)=u_i(\{v^{j-1}\} \cup B_j) $.
    

    (4b) $x^{j-1}_k =v^{j-1}-\frac{1}{2}$ and $x^j_k=v^j+\frac{1}{2}$. We have $\hat{u}_i(\fx_k,j)=u^-_i(\{v^L\} \cup \{v^{j-1}\} \cup B_j \cup \{v^j\} )$, where $v^L \notin P_j(\fx_k)$ is the chore to the left of $P_j(\fx_k)$ such that $\{v^L\} \cup P_j(\fx_k)$ is connected.

    (4c) $x^{j-1}_k =v^{j-1}$ and $x^j_k=v^j$. We note that $\fx_k$ never satisfies the condition of (4c) due to the condition of Case 4.

    (4d) $x^{j-1}_k =v^{j-1}$ and $x^j_k=v^j+\frac{1}{2}$. We have $\hat{u}_i(\fx_k,j)=u^-_i( \{v^{j-1}\} \cup B_j \cup \{v^j\} )$.

    (4e) $x^{j-1}_k =v^{j-1}+\frac{1}{2}$ and $x^j_k=v^j$. We claim that (4e) never happens. Due to Equation (\ref{eq::kuhn permutation}), in order to make $x^{j-1}_k =v^{j-1}+\frac{1}{2}$ and $x^j_k=v^j$ hold, there must be some $\fx_t$ satisfying one of the following: (1) $x^{j-1}_{t}=v^{j-1}$ and $x^j_{t}=v^j$, or (2) $x^{j-1}_{t}=v^{j-1}+\frac{1}{2}$ and $x^j_{t}=v^j-\frac{1}{2}$.
    The condition (1) never holds for any $t$ due to the condition of Case 4, and the condition (2) never holds as $x^j_t\geq v^j$ for all $t$.

    (4f) $x^{j-1}_k =v^{j-1}+\frac{1}{2}$ and $x^j_k=v^j+\frac{1}{2}$. We have $\hat{u}_i(\fx_k,j)=u^-_i(\{v^{j-1}\} \cup B_j \cup \{v^j\} )$.

    \begin{figure}[htbp] %
    \centering 
    \includegraphics[width=0.48\textwidth]{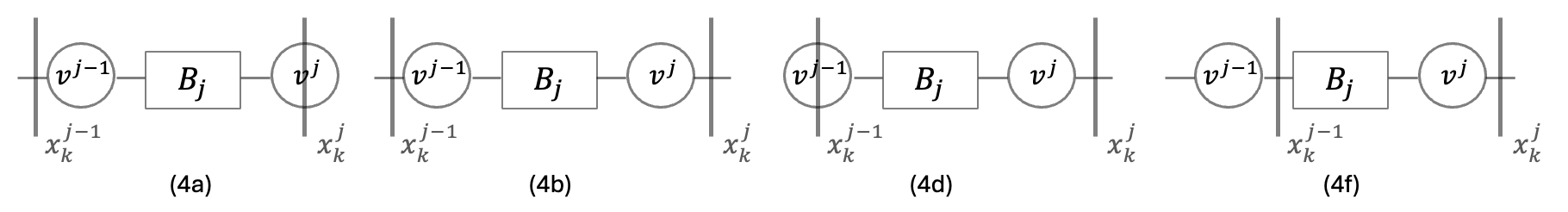} 
    \caption{Illustration of possible cases for $j=2,\ldots,n-1$ and Case 3.2.}
    \label{fig:J4} 
\end{figure}

   We illustrate the four possible subcases in Figure \ref{fig:J4}.
   The possible values of $\hat{u}_i(\fx_k,j)$ are $u_i(\{v^{j-1}\} \cup B_j)$, $u^-_i(\{v^L\} \cup \{v^{j-1}\} \cup B_j \cup \{v^j\} )$, and $u^-_i( \{v^{j-1}\} \cup B_j \cup \{v^j\} )$.
    As $v^j$ is the right-most item of $\{v^{j-1}\} \cup B_j \cup \{v^j\}$, it holds that $u^-_i( \{v^{j-1}\} \cup B_j \cup \{v^j\} ) \leq u_i(\{v^{j-1}\} \cup B_j )$.
    Thus, it is not hard to see that $u^-_i(P^*_j) \leq \hat{u}_i(\fx_k,j)$.
    As $v^L$ is the left-most item of $\{v^L\} \cup \{v^{j-1}\} \cup B_j \cup \{v^j\} $, then $u^-_i(\{v^L\} \cup \{v^{j-1}\} \cup B_j \cup \{v^j\} ) \leq u_i(\{v^{j-1}\} \cup B_j \cup \{v^j\} )$.
    Therefore, $u^-_i(P^*_j) \leq \hat{u}_i(\fx_k,j) \leq u_i(P^*_j)$ always holds.

    \medskip
    \paragraph{For $j=2,\ldots,n-1$ and $v^{j-1}=v^j$.}

    As $v^{j-1}$ and $v^j$ are identical, to avoid confusion, let $v^*$ denote this chore. When $v^{j-1}=v^j$, we have $B_j=\emptyset$, and thus, $P^*_j$ is either the empty set or $\{v^*\}$.

    If $P^*_j=\emptyset$, then $u^-_i(P^*_j)=u_i(P^*_j)=0$. It suffices to show that $\hat{u}_i(\fx_k,j)=0$.
    By the virtual function, if $x^j_k-x^{j-1}_k \leq \frac{1}{2}$, then $\hat{u}_i(\fx_k,j)=0$. 
    Then focus on the case when $x^j_k-x^{j-1}_k > \frac{1}{2}$.
    By definitions of $v^{j-1}$ and $v^j$, there exists $t$ and $t'$ such that $v^{j-1}_t=v^*$ and $v^j_{t'}=v^*$,
    and hence $|x^j_k-v^*|\leq \frac{1}{2}$ and $|x^{j-1}_k-v^*|\leq \frac{1}{2}$.
    Thus, if $x^j_k-x^{j-1}_k > \frac{1}{2}$, then it must be the case that $x^{j-1}_k=v^*-\frac{1}{2}$ and $x^{j}_k=v^*+\frac{1}{2}$.
    As $P^*_j=\emptyset$, 
    chore $v^*$ must be allocated to some $P^*_\ell$ with $\ell < j$. However, as $x^{j-1}_k=v^*-\frac{1}{2}$, we have $x^{\ell}_t\leq v^*$ for all $\fx_t$, that is, $v^*$ does not fully appear in any $P_\ell(\fx_t)$.
    By Algorithm \ref{alg:rounding-ef1}, $v^*$ would not be allocated to $P^*_\ell$, a contradiction.
    Thus, $x^j_k-x^{j-1}_k > \frac{1}{2}$ never happens under the case of $P^*_j=\emptyset$.

    If $P^*_j=\{v^*\}$, then it suffices to show $\hat{u}_i(\fx_k,j) \leq u_i(v^*)$. Again, when $x^j_k-x^{j-1}_k \leq \frac{1}{2}$, then $\hat{u}_i(\fx_k,j)=0$ and we are done.
    If $x^j_k-x^{j-1}_k > \frac{1}{2}$, then it must be the case that $x^{j-1}_k=v^*-\frac{1}{2}$ and $x^{j}_k=v^*+\frac{1}{2}$.
    Then we have $\hat{u}_i(\fx_k,j)=u^-_i(\{v^L\} \cup \{v^*\}) \leq u_i(v^*)$, where $v^L \notin v^*$ is the chore to the left of $v^*$ such that $\{v^L,v^*\}$ is connected.
\end{proof}


\begin{theorem}\label{thm::ef1}
    For an arbitrary number of agents with monotone valuations, there always exists an $\textnormal{EF1}_{outer}$ connected division.
\end{theorem}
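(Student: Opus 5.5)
The plan is to combine Sperner's lemma (Theorem~\ref{thm::sperner}) with Lemma~\ref{lem::ef1}. First, we fix the ownership labeling of $V(\mathrm{T}_{half})$ for Kuhn's triangulation, so that every elementary simplex has $n$ distinct owners. Each vertex is then colored by its owner $i$ using $\lambda_i$, with ties broken arbitrarily so that each vertex receives a single color. As argued after the definition of the virtual functions, the ``Good House'' and ``Miserly Tenants'' conditions hold, so this is a Sperner coloring of $S_m$. Theorem~\ref{thm::sperner} then gives a fully colored elementary simplex $S^*=\textnormal{conv}(\fx_1,\ldots,\fx_n)$. For each vertex $\fx_k$, owned by agent $a_k$, the color is some $\pi(a_k)\in[n]$, and these colors are pairwise distinct. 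Hence $\pi$ is a permutation of $[n]$ assigning to each agent $i$ a bundle index $\pi(i)$, together with a vertex $\fx_{k(i)}$ at which
$$\hat{u}_i(\fx_{k(i)},\pi(i))\le \hat{u}_i(\fx_{k(i)},j)\quad\text{for all } j\in[n].$$

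Next, we run Algorithm~\ref{alg:rounding-ef1} on $S^*$ to obtain a complete connected division $\cP^*$, and we give bundle $P^*_{\pi(i)}$ to agent $i$. To verify $\textnormal{EF1}_{outer}$, we fix agents $i$ and $j$ and chain the two inequalities of Lemma~\ref{lem::ef1}, both taken at the same vertex $\fx_{k(i)}$:
$$u^-_i(P^*_{\pi(i)})\le \hat{u}_i(\fx_{k(i)},\pi(i))\le \hat{u}_i(\fx_{k(i)},\pi(j))\le u_i(P^*_{\pi(j)}).$$

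It remains to convert the bound on $u^-_i(P^*_{\pi(i)})$ into one of the two clauses of the definition.
\begin{itemize}
\item If $P^*_{\pi(i)}=\emptyset$, then $u_i(P^*_{\pi(i)})=0\le u_i(P^*_{\pi(j)})$, so condition (1) holds.
\item Otherwise $u^-_i(P^*_{\pi(i)})$ is a minimum over outer items $v$ for which $P^*_{\pi(i)}\setminus\{v\}$ is connected, and this minimum is attained. For that $v$, $u_i(P^*_{\pi(i)}\setminus\{v\})\le u_i(P^*_{\pi(j)})$, which is condition (2).
\end{itemize}

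The remaining checks are routine. The output of Algorithm~\ref{alg:rounding-ef1} is a genuine partition into connected bundles: each $P^*_j$ is a subset of $\{v^{j-1}\}\cup B_j\cup\{v^j\}$, and every hidden item is assigned exactly once, because the last step sweeps up $v^{n-1}$ and unallocated items are assigned at Line~\ref{step:alg-ef1-unallocate}. The dummy chore $v_0$ used in defining $v^L$ has zero disutility and never enters $\cP^*$, so it does not affect the division.

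I expect the main difficulty to lie entirely in Lemma~\ref{lem::ef1}, whose case analysis the paper already carries out. At the theorem level, the only delicate point is that both inequalities of the lemma must be used at the same vertex $\fx_{k(i)}$, namely the vertex owned by agent $i$. Since the lemma holds for every $\fx_k$ and every bundle index, this is immediate, so no limiting argument over finer triangulations is needed, in contrast with the continuous case.
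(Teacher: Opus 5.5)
Your proposal is correct and follows the paper's proof essentially verbatim: you obtain a fully colored elementary simplex via Sperner's lemma, round it with Algorithm~\ref{alg:rounding-ef1}, and chain both inequalities of Lemma~\ref{lem::ef1} at the vertex owned by agent $i$; your extra checks (turning the bound on $u^-_i$ into clause (2), and verifying that $\cP^*$ is a connected partition) are routine details the paper leaves implicit. One small correction is that ties in $\lambda_i$ must not be broken arbitrarily (for instance, if every agent has identically zero disutility, every vertex could receive color $1$, and then no fully colored simplex exists), so at vertices on the boundary of $S_m$ you should select a zero-length piece, which always attains the minimum value $0$, as the Miserly Tenants condition intends---a point the paper also leaves implicit.
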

\begin{proof}
    First, assign ownership labels to vertices of $\mathrm{T}_{half}$, ensuring that $n$ vertices of any elementary simplex have $n$ distinct labels.
    Then let owners color their vertices based on the $\{\lambda_i\}_i$. 
    By Sperner's lemma (Theorem \ref{thm::sperner}), there exists a fully colored elementary simplex $S^*=\textnormal{conv}(\fx^*_1,\ldots,\fx^*_n)$.
    Without loss of generality, assume that for each $i\in [n]$, vertex $\fx^*_i$ is owned by agent $i$.
    Since $S^*=\textnormal{conv}(\fx^*_1,\ldots,\fx^*_n)$ is fully colored, then there exists a permutation $\pi:[n]\rightarrow [n]$ such that for each $i\in [n]$, the $\pi(i)$-th bundle in the $\cP(\fx_i^*)$ is the bundle most preferred (based on the virtual function) by agent $i$.
    Formally, for any $i,j\in [n]$, it holds that
    $$
        \hat{u}_i(\fx_i^*, \pi(i)) \leq \hat{u}_i(\fx_i^*, j).
    $$
    Then apply Algorithm \ref{alg:rounding-ef1} and round $S^*=\textnormal{conv}(\fx^*_1,\ldots,\fx^*_n)$ to the complete division $\cP^*=(P^*_1,\ldots,P^*_n)$.
    For an agent $i \in [n]$, bundle $P^*_{\pi(i)}$, vertex $\fx^*_i$, and $j\in [n]$, we have
    \[
    \begin{aligned}
        u^-_i(P^*_{\pi(i)}) & \overset{(a)}{\leq}  \hat{u}_i(\fx^*_i,\pi(i))
        \overset{(b)}{\leq} \hat{u}_i(\fx^*_i,j) 
        \overset{(c)}{\leq} u_i(P^*_{j}),
    \end{aligned}
    \]
    where (a) and (c) follow from Lemma~\ref{lem::ef1} and (b) is because agent $i$ color her vertex $\fx^*_i$ as $\pi(i)$.
    Therefore, assigning each $i$ bundle $P^*_{\pi(i)}$ yields an $\textnormal{EF1}_{outer}$ connected division.
\end{proof}

\section{Conclusion}
In this work, we prove that connected $\textnormal{EF1}_{outer}$ divisions of indivisible chores exist for an arbitrary number of agents with monotone disutility functions. 
On top of Simmons-Su's framework, we introduce a rounding approach which relies on virtual disutility functions that utilize not only the information of the bundle between the boundaries but also the information outside.

Our results raise many interesting questions for future research. 
We focus on fairness in this work, while exploring the compatibility of fairness and Pareto optimality (PO) is also an interesting direction. 
In particular, 
Mahara \shortcite{DBLP:conf/soda/Mahara26} demonstrated that EF1 and PO are compatible when all items are chores (without connectivity constraints) and agents have additive disutility functions; that is, the disutility of a subset equals the sum of the disutilities of all chores within it.  
It is worthwhile to investigate whether EF1 and PO are also compatible in the discrete chores division setting.  
In addition, our work provides only a constructive proof, and the computational complexity of finding an $\textnormal{EF1}_{outer}$ chores division remains unknown.






\bibliographystyle{named}
\bibliography{ijcai26}

\end{document}